\documentclass[12pt]{article}

\usepackage[T1]{fontenc}
\usepackage[utf8]{inputenc}
\DeclareUnicodeCharacter{0300}{\`}
\DeclareUnicodeCharacter{0301}{\'}
\DeclareUnicodeCharacter{0302}{\^}
\DeclareUnicodeCharacter{0303}{\~}
\DeclareUnicodeCharacter{0308}{\"}
\DeclareUnicodeCharacter{0327}{\c}
\usepackage[a4paper,margin=1.02in]{geometry}
\usepackage{microtype}
\usepackage{setspace}
\usepackage{amsmath,amssymb,amsthm,mathtools}
\usepackage{bm}
\usepackage{booktabs}
\usepackage{threeparttable}
\usepackage{tabularx}
\usepackage{adjustbox}
\usepackage{enumitem}
\usepackage{graphicx}
\usepackage{caption}
\usepackage{float}
\usepackage{placeins}
\usepackage{hyperref}
\usepackage[nameinlink,noabbrev]{cleveref}
\usepackage[backend=biber,style=authoryear,natbib=true,maxcitenames=2,maxbibnames=99]{biblatex}
\graphicspath{{./}{figures/}}
\hypersetup{
  hidelinks,
  pdfauthor={Miquel Noguer i Alonso and Ali Al-Fallouji},
  pdftitle={Tail Risk Management with Puts and Trend Following: A CVaR Framework for Crashes and Drawdowns}
}
\newtheorem{theorem}{Theorem}[section]
\newtheorem{proposition}[theorem]{Proposition}
\newtheorem{lemma}[theorem]{Lemma}
\newtheorem{corollary}[theorem]{Corollary}
\theoremstyle{definition}

\newtheorem{assumption}[theorem]{Assumption}
\theoremstyle{remark}
\newtheorem{remark}[theorem]{Remark}

\crefname{proposition}{Proposition}{Propositions}
\Crefname{proposition}{Proposition}{Propositions}
\crefname{lemma}{Lemma}{Lemmas}
\Crefname{lemma}{Lemma}{Lemmas}
\crefname{corollary}{Corollary}{Corollaries}
\Crefname{corollary}{Corollary}{Corollaries}
\crefname{assumption}{Assumption}{Assumptions}
\Crefname{assumption}{Assumption}{Assumptions}
\crefname{definition}{Definition}{Definitions}
\Crefname{definition}{Definition}{Definitions}
\crefname{remark}{Remark}{Remarks}
\Crefname{remark}{Remark}{Remarks}

\newcommand{\E}{\mathbb{E}}
\newcommand{\Prob}{\mathbb{P}}
\newcommand{\R}{\mathbb{R}}
\newcommand{\Q}{\mathbb{Q}}
\newcommand{\F}{\mathcal{F}}
\newcommand{\CVaR}{\mathrm{CVaR}}
\newcommand{\VaR}{\mathrm{VaR}}

\title{\vspace{-3.0em}Tail Risk Management with Puts and Trend Following:\\
A CVaR Framework for Crashes and Drawdowns}

\author{%
\begin{tabular}{c@{\hspace{4em}}c}
Miquel Noguer i Alonso & Ali Al-Fallouji\\[-0.05em]
{\small Artificial Intelligence Finance Institute (AIFI)} & {\small Mirabaud Group}
\end{tabular}}

\date{July 1, 2026}

\begin{document}
\maketitle
\vspace{-1.6em}

\begin{abstract}
Tail-risk management is not only an instrument-selection problem. It is an
allocation problem across loss mechanisms: abrupt crash states, volatility
repricing, and persistent drawdowns require different forms of protection. This
paper develops a continuous-time CVaR framework that places two common
protection sleeves---long out-of-the-money put options and systematic
trend-following overlays---inside one coherent tail-risk mandate. The option
sleeve is modeled as a marked-to-market traded asset, so premium drag, diffusion
exposure, and jump repricing enter through its physical return process rather
than through inconsistent terminal-payoff accounting. The resulting Markov state
contains wealth, spot, stochastic variance, and an exponentially weighted
log-return signal, and we derive the associated Hamilton--Jacobi--Bellman
equation in viscosity form. The main analytical separation is temporal: convex
insurance reprices immediately on jump impact, whereas trend following is late
on the first shock because its signal must cross zero, but becomes increasingly
defensive during persistent drawdowns without requiring fresh option premium. We
then give sufficient and local conditions for an interior hybrid allocation,
derive a CVaR policy-gradient identity, and introduce a four-axis diagnostic
layer separating conditional convexity, tail-event reliability, non-stress carry,
and drawdown persistence. Stylized Monte Carlo experiments illustrate the
mechanism: fixed equal-weight hybrids and grid-optimized hybrids reduce terminal
CVaR relative to either pure sleeve in the reported regimes, while the exact
weight location remains calibration-dependent. The contribution is a transparent
risk-management framework for deciding how much convex crash protection and how
much signal-driven drawdown protection a mandate should hold.
\end{abstract}

\vspace{0.4em}
\noindent\textit{Keywords:} tail-risk management; tail-risk hedging; put options; trend following; conditional value-at-risk; stochastic control; Hamilton--Jacobi--Bellman equation; jump diffusion; stochastic volatility; policy gradient; hedge-quality diagnostics.\quad

\newpage

\section{Introduction}

Institutions that care about large drawdowns do not face a generic portfolio
problem. They face a state-contingent downside problem: losses arrive through a
mixture of abrupt jumps, volatility spikes, and persistent directional selloffs.
That is exactly where classical mean-variance analysis is weakest. It treats
upside and downside dispersion symmetrically and gives no special role to the
shape of the left tail \citep{Markowitz1952}.

We therefore use the phrase \emph{tail-risk management} deliberately. A hedge is
an instrument-level object; a tail-risk mandate is a portfolio-level design
problem. The mandate must decide how much convexity, reliability, carry drag,
and persistence it is willing to hold in order to reduce expected loss in the
left tail. Put options and trend following are two sleeves inside that broader
allocation problem, not mutually exclusive answers to the same question.

In practice, two very different hedging technologies dominate that left-tail
problem. The first is explicit convex insurance through long out-of-the-money
index puts. Their appeal is obvious: if the market gaps down, the option
reprices immediately and contractually. The second is dynamic trend-following,
implemented through signal-driven futures or overlay trades that progressively
cut or reverse exposure as price action deteriorates. Trend is slower, but it
can be positively carried over long samples and performs especially well when
losses unfold over time rather than on impact \citep{Hurst2017,Ilmanen2020}.

Those facts suggest a simple economic principle: the best hedge depends on the
transmission mechanism of the drawdown. Puts are strongest against sudden crash
states, but systematic option rolling often carries a persistent premium drag
because index implied variance tends to exceed subsequently realized variance
\citep{Broadie2009,Israelov2017}. Trend has the opposite profile. It is not
contractual and can be caught long during violent reversals, yet it often works
well in prolonged selloffs because it converts persistent negative returns into
progressively more defensive positioning \citep{Moskowitz2012,Hurst2017}.

The systematic hedging literature also suggests that hedge quality is not a
one-dimensional object. It is often decomposed into conditional convexity or
asymmetric co-movement with the market in bad states, reliability or the
probability of producing a positive payoff in tail events, carry outside stress,
and persistence across drawdown horizons. Downside-risk models emphasize that
market exposures can change precisely when the price of risk is high
\citep{Lettau2014}; asset-pricing theory puts the cost of carry at the center of
expected-return trade-offs \citep{Cochrane2005}; and the managed-futures and
time-series-momentum evidence highlights the importance of drawdown duration
\citep{Moskowitz2012,Hurst2013,Hurst2017}. A scalar CVaR criterion aggregates
these mechanisms. That aggregation is useful for optimization, but it can hide
which economic channel is actually providing protection.

What is still missing is a single framework that puts both channels inside the
same tail-risk objective with internally consistent accounting while also making
these hedge-quality dimensions explicit. This paper provides that framework. We
model the risky asset with Heston-type stochastic variance and negative jumps,
let the investor choose both directional exposure and convex-overlay exposure,
and evaluate the resulting strategy through terminal-loss CVaR using the
Rockafellar--Uryasev representation
\citep{Rockafellar2000,Acerbi2002,Artzner1999}. The resulting control problem
produces five concrete improvements over existing heuristic discussions.

\paragraph{Contributions.}
The paper makes five contributions.
\begin{enumerate}[leftmargin=1.6em]
    \item It formulates convex insurance and dynamic trend inside one
    continuous-time CVaR-minimization problem with a fully Markov state.
    Crucially, the option overlay is handled as a traded mark-to-market asset,
    which resolves the premium-accounting inconsistency that often appears in
    terminal-payoff comparisons.
    \item It makes explicit a four-axis hedge-quality diagnostic---conditional
    convexity, tail-event reliability, non-stress carry, and drawdown
    persistence---and shows how these dimensions can enter the problem either as
    constraints or as penalty terms around the CVaR objective.
    \item It derives the jump-diffusion HJB equation in viscosity form for the
    augmented state consisting of wealth, spot, variance, and the trend signal.
    \item It gives analytical comparative statics that separate the two hedge
    channels: puts hedge jump states on impact, whereas trend hedges drawdowns
    that last long enough for the signal to cross through zero and turn the
    portfolio defensive.
    \item It derives a CVaR policy-gradient identity and connects it to a fast
    simulation pipeline used in our numerical experiments, which delivers
    stylized but informative regime evidence.
\end{enumerate}

\paragraph{Scope.}
This is a theory-first paper with stylized Monte Carlo evidence. It is not a
full historical backtest, and it does not claim a complete option-surface
calibration. In particular, the experiments use a fast Black--Scholes proxy for
option valuation inside the simulation section. That choice is appropriate for
illustrating the economics of the hybrid hedge, but it should not be confused
with a production empirical implementation.

The empirical claims are therefore stated in the language of \emph{mechanism
validation}: the simulations check whether the model reproduces the expected
ranking of protection channels across crash, prolonged-bear, and volatility
states. They do not establish universal dominance of the hybrid over every
possible option program or trend specification. The reported hybrid optima are
best read as calibrated examples of the allocation logic developed below.

The four-axis diagnostic layer should be read in the same spirit. It is not yet
a full historical taxonomy of all tradable hedges. It is a transparent way to
label the mechanisms that the CVaR objective otherwise compresses into a single
number, and it provides a bridge from the two-channel theoretical model to a
larger structuring universe.

The analysis draws on continuous-time portfolio choice
\citep{Merton1969,Merton1971}, stochastic control for jump diffusions
\citep{Oksendal2005,FlemingSoner2006,Pham2009}, the economics of option-based
crash protection \citep{Bhansali2014,Broadie2009,Israelov2017}, and the
empirical literature on trend-following and time-series momentum
\citep{Moskowitz2012,Hurst2013,Hurst2017,Ilmanen2020}. The rest of the paper is organized
as follows. \Cref{sec:setup} introduces the market environment and the two hedge
channels. \Cref{sec:cvar} states the CVaR objective. \Cref{sec:control} gives
the dynamic control problem and its HJB characterization. \Cref{sec:comparative}
contains the key comparative statics and a reduced-form interior-hybrid result.
\Cref{sec:gradient} derives the policy-gradient identity. \Cref{sec:mc}
reports stylized Monte Carlo evidence from our numerical experiments.
\Cref{sec:discussion} and \Cref{sec:conclusion} close.

\section{Economic setup}
\label{sec:setup}

\subsection{Asset dynamics}

Let $(\Omega,\F,\{\F_t\}_{t\in[0,T]},\Prob)$ be a filtered probability space
satisfying the usual conditions. The risky asset price $S_t$ follows a
stochastic-volatility jump diffusion,
\begin{equation}
\frac{dS_t}{S_{t^-}} = \mu\,dt + \sqrt{v_t}\,dW_t^S + (e^{\zeta}-1)\,dN_t,
\label{eq:S}
\end{equation}
where $N_t$ is a Poisson process with intensity $\lambda>0$ and i.i.d.
log-jump sizes $\zeta$ with distribution $F_\zeta$ supported on a bounded
interval
\begin{equation}
\label{eq:zeta_bounded}
\zeta \in [\underline\zeta,\overline\zeta]\subset\R,
\qquad \underline\zeta<0<\overline\zeta.
\end{equation}
Boundedness of the jump support is used below to guarantee strict wealth
positivity under two-sided directional exposure. In practice $F_\zeta$ can be
taken as a truncated Gaussian $\mathcal{N}(\mu_\zeta,\sigma_\zeta^2)$
restricted to $[\underline\zeta,\overline\zeta]$; the truncation bounds are
non-binding in the stylized calibration of \Cref{tab:params} because
$\overline\zeta$ is set at several standard deviations above $\mu_\zeta$.
The variance process is of Heston type \citep{Heston1993}:
\begin{equation}
 dv_t = \kappa(\theta-v_t)dt + \xi\sqrt{v_t}\,dZ_t,
 \qquad dW_t^S\,dZ_t = \rho\,dt,
 \label{eq:v}
\end{equation}
with $\kappa,\theta,\xi>0$ and $\rho\in(-1,0)$. The negative correlation
captures the leverage effect. Throughout we assume the Feller condition
$2\kappa\theta>\xi^2$ holds, which keeps $v_t$ strictly positive almost surely
and simplifies the boundary analysis of the integro-PDE below.

For signal construction it is convenient to work with log-prices. Writing
$y_t := \log S_t$, It\^o's formula for jump diffusions yields
\begin{equation}
 dy_t = \Bigl(\mu - \tfrac12 v_t\Bigr)dt + \sqrt{v_t}\,dW_t^S + \zeta\,dN_t.
 \label{eq:logS}
\end{equation}

\begin{assumption}[Admissible controls]
\label{ass:admissible}
The investor chooses a progressively measurable pair
$c_t=(\pi_t,q_t)$ taking values in the compact set
$\mathcal{K}=[\underline\pi,\overline\pi]\times[0,\overline q]$, where
$\pi_t$ is the directional exposure to the risky asset and $q_t$ is the return
exposure to the convex overlay. Directional exposure is two-sided:
$\underline\pi<0\le \overline\pi$. To guarantee strict wealth positivity
across both directions of jump, the directional bounds and the jump support
in \eqref{eq:zeta_bounded} are required to satisfy the two-sided leverage
conditions
\begin{equation}
\label{eq:leverage_cond}
\overline\pi+\overline q < 1,
\qquad
1+\underline\pi\,(e^{\overline\zeta}-1)-\overline q>0,
\end{equation}
together with the overlay bound $R_P\ge -1$ used in Lemma~\ref{lem:positivity}
below. The first condition controls the worst-case \emph{downside} jump for a
long position (where $e^{\underline\zeta}-1\ge -1$ and $R_P\ge -1$); the
second controls the worst-case \emph{upside} jump for a short position.
Controls satisfy $\E[\int_0^T (\pi_t^2+q_t^2)dt]<\infty$.
\end{assumption}

A risk-free asset is available, but we work in excess-return units and set the
short rate to zero throughout.

\subsection{Convex insurance as a traded overlay}
\label{sec:overlay}

We need to distinguish two economic objects that are often conflated in the
structured-hedging literature but which have different mathematical
structure: a \emph{fixed-maturity} traded option, and a \emph{rolled}
constant-maturity trading strategy. The continuous-time theory below works
with the former; the Monte Carlo experiments in \Cref{sec:mc} implement the
latter. The two are related through a simple roll-date correction that we
make explicit at the end of this subsection.

\paragraph{The fixed-maturity overlay.}
Let $T_{\mathrm{opt}}\in(0,T]$ be a fixed option maturity and let
$P(t,s,v)$ denote the marked-to-market price, at time $t<T_{\mathrm{opt}}$
and state $(S_t,v_t)=(s,v)$, of a European option with payoff
$\Phi(S_{T_{\mathrm{opt}}})$ at maturity. In the put case of interest,
$\Phi(S)=(K-S)^+$ with $K\in(0,\infty)$. The overlay is a genuine traded
asset between inception and maturity, and the investor can hold fractional
positions of notional return exposure $q_t$ to its physical return process.

\begin{assumption}[Overlay regularity]
\label{ass:overlay}
There exists a measure $\Q$, equivalent to $\Prob$ on $\mathcal{F}_T$,
under which $P(t,S_t,v_t)$ is the discounted martingale price of the option
payoff $\Phi(S_{T_{\mathrm{opt}}})$. The map
$P:[0,T_{\mathrm{opt}}]\times(0,\infty)^2\to(0,\infty)$ is once continuously
differentiable in time and twice continuously differentiable in $(s,v)$
on $[0,T_{\mathrm{opt}})\times(0,\infty)^2$, with locally bounded derivatives,
and $P(t,s,v)\ge (K-s)^+\ge 0$. The overlay's physical return process is
obtained by applying It\^o's formula to $P(t,S_t,v_t)$ under $\Prob$.
\end{assumption}

Define the option elasticity and vega loading by
\begin{equation}
\delta_P(t,s,v):=\frac{sP_s(t,s,v)}{P(t,s,v)},
\qquad
\nu_P(t,s,v):=\frac{P_v(t,s,v)}{P(t,s,v)}.
\label{eq:elasticity}
\end{equation}
Under Assumption~\ref{ass:overlay}, the overlay return can be written as
\begin{equation}
\frac{dP_t}{P_{t^-}}
= \mu_P(t,S_t,v_t)dt
+ \delta_P(t,S_t,v_t)\sqrt{v_t}\,dW_t^S
+ \nu_P(t,S_t,v_t)\xi\sqrt{v_t}\,dZ_t
+ R_P(t,S_{t^-},v_t;\zeta)\,dN_t,
\label{eq:put_return}
\end{equation}
where
\begin{equation}
R_P(t,s,v;\zeta):=\frac{P(t,se^{\zeta},v)-P(t,s,v)}{P(t,s,v)}.
\label{eq:RPdef}
\end{equation}
The physical drift collects the remaining terms from It\^o's formula:
\begin{equation}
\mu_P(t,s,v) := \frac{1}{P}\Bigl(P_t + \mu s P_s + \kappa(\theta-v)P_v
+ \tfrac12 v s^2 P_{ss} + \rho\xi v s P_{sv} + \tfrac12 \xi^2 v P_{vv}\Bigr).
\label{eq:muP}
\end{equation}

This convention writes the jump contribution in raw $dN_t$ form. Thus
$\mu_P$ is the continuous drift of the overlay between jump arrivals. The
unconditional instantaneous physical drift of $P$ is
$\mu_P+\lambda\E_\zeta[R_P(t,s,v;\zeta)]$. Equivalently, one could rewrite
\eqref{eq:put_return} with the compensated process
$d\widetilde N_t=dN_t-\lambda dt$ and absorb the compensator into the drift.
We keep the raw-jump convention because it makes crash-state repricing explicit
in the wealth equation and in the HJB jump operator.

\begin{proposition}[Marked-to-market option decomposition]
\label{prop:option_decomp}
Under Assumption~\ref{ass:overlay}, the dynamics in \eqref{eq:put_return} hold. The jump
return is the immediate percentage repricing of the overlay when the underlying
moves from $s$ to $se^{\zeta}$.
\end{proposition}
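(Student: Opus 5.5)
The plan is to apply It\^o's formula for semimartingales with jumps to $f(t,S_t,v_t):=P(t,S_t,v_t)$ on $[0,T_{\mathrm{opt}})$. Assumption~\ref{ass:overlay} gives $P\in C^{1,2,2}$ there with locally bounded derivatives, which is exactly what the formula requires. We also have $P>0$, so we may divide by $P_{t^-}$ at the end.

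Before applying the formula, we record three facts about the state process.
\begin{itemize}
\item From \eqref{eq:S}, the continuous martingale part of $S$ is $S_{t^-}\sqrt{v_t}\,dW^S_t$.
\item The continuous finite-variation part of $S$ is $\mu S_{t^-}dt$.
\item The jumps of $S$ are $\Delta S_t=S_{t^-}(e^{\zeta}-1)\Delta N_t$, i.e.\ $S$ moves from $S_{t^-}$ to $S_{t^-}e^{\zeta}$ at a jump time.
\end{itemize}
The variance $v$ from \eqref{eq:v} is continuous. By the Feller condition it stays in $(0,\infty)$, so $(S_t,v_t)$ remains in the domain where $P$ is smooth. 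The quadratic covariations of the continuous parts are
\[
d[S]^c=v S^2dt,\qquad d[v]=\xi^2 v\,dt,\qquad d[S,v]^c=\rho\xi vS\,dt,
\]
using $dW^S dZ=\rho\,dt$.

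It\^o's formula then gives
\[
dP_t=\Bigl(P_t+\mu S P_s+\kappa(\theta-v)P_v+\tfrac12 vS^2P_{ss}+\rho\xi vSP_{sv}+\tfrac12\xi^2vP_{vv}\Bigr)dt+SP_s\sqrt{v}\,dW^S+P_v\xi\sqrt v\,dZ+\Delta P_t.
\]
All derivatives are evaluated at $(t,S_{t^-},v_t)$. The key jump bookkeeping step is that the first-order compensating term $P_s\Delta S$ appearing in the jump-It\^o formula cancels against the jump part of $\int P_s\,dS$. What remains is exactly
\[
\Delta P_t=\bigl(P(t,S_{t^-}e^{\zeta},v_t)-P(t,S_{t^-},v_t)\bigr)\Delta N_t .
\]
Since $N$ is a simple Poisson process with finitely many jumps on $[0,T]$, the sum of jumps is a finite sum and can be written as an integral against $dN_t$. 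No compensator enters because we use the raw $dN_t$ convention.

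Dividing by $P_{t^-}>0$ and using the definitions \eqref{eq:elasticity}, \eqref{eq:muP}, \eqref{eq:RPdef} yields \eqref{eq:put_return} term by term:
\begin{itemize}
\item $SP_s/P=\delta_P$;
\item $P_v/P=\nu_P$;
\item the $dt$ bracket divided by $P$ is $\mu_P$;
\item the jump term is $R_P\,dN_t$.
\end{itemize}
The interpretive claim is then immediate. At a jump time, $R_P=(P_t-P_{t^-})/P_{t^-}$ with $S_t=S_{t^-}e^{\zeta}$ and $v_t$ unchanged, so $R_P$ is the instantaneous percentage repricing of the overlay.

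The main obstacle is the regularity near maturity and near the boundary $s\to0$ or $s\to\infty$. Smoothness holds only on the open set $[0,T_{\mathrm{opt}})\times(0,\infty)^2$ with local bounds. I would handle this by localizing with stopping times $\tau_n=\inf\{t:\ S_t\notin[1/n,n]\text{ or } v_t\notin[1/n,n]\}\wedge(T_{\mathrm{opt}}-1/n)$ and applying It\^o's formula on $[0,\tau_n]$. There the derivatives are bounded, and the jumps stay in a compact set because $\zeta$ has bounded support \eqref{eq:zeta_bounded}. Since $S>0$ and $v>0$ a.s. (Feller), $\tau_n\uparrow T_{\mathrm{opt}}$, so the dynamics hold on $[0,T_{\mathrm{opt}})$.

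The martingale property under $\Q$ is not needed for the decomposition itself. It only ensures that $P$ is well-defined and positive, and it determines $\mu_P$ via the pricing PDE if desired.
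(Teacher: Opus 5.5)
Your proposal is correct and follows the same route as the paper: apply It\^o's formula for jump diffusions to $P(t,S_t,v_t)$, read the drift and the two diffusive loadings off the continuous part, and identify the jump part as the repricing $P(t,S_{t^-}e^{\zeta},v_t)-P(t,S_{t^-},v_t)$ divided by $P_{t^-}$. Your additional steps only fill in details that the paper's one-paragraph proof leaves implicit: the explicit covariation computation, the cancellation of the $P_s\Delta S$ term, and the localization away from maturity and the boundary of $(0,\infty)^2$.
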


\begin{proof}
Apply It\^o's formula for jump diffusions to the semimartingale
$P(t,S_t,v_t)$. The continuous part produces the drift and the two diffusive
loadings. The jump part is exactly the mark-to-market repricing
$P(t,S_{t^-}e^{\zeta},v_t)-P(t,S_{t^-},v_t)$ divided by the pre-jump price.
\end{proof}

\begin{remark}[Where carry enters]
\label{rem:carry}
The option premium is not ignored anywhere in this formulation. It is encoded in
$P_t$ and therefore in the overlay's physical drift $\mu_P$. Historical evidence
suggests that this drift is often negative for systematically rolled index puts
because of the variance risk premium \citep{Broadie2009,Israelov2017}. The
continuous-time theory below needs only the physical return process in
\eqref{eq:put_return}; it does not require the premium to be modeled separately
at terminal time.
\end{remark}

\paragraph{The rolled-strategy implementation.}
In the Monte Carlo experiments of \Cref{sec:mc} the overlay is implemented
as a self-financing trading rule rather than a single fixed-maturity
contract. At each roll date $\tau_k = k\Delta$, $k=0,1,\ldots$, the
investor simultaneously (i) liquidates the previous contract at its
prevailing market price $P^{(k-1)}(\tau_k,S_{\tau_k},v_{\tau_k})$ and
(ii) enters a fresh one-month 10\% OTM put with maturity
$T^{(k)}_{\mathrm{opt}}=\tau_k+\Delta_{\mathrm{opt}}$. \emph{The rolls
occur strictly before the natural expiry of the incumbent contract}:
concretely we take $\Delta<\Delta_{\mathrm{opt}}$ so that at each roll
$\tau_k$ the outgoing contract has residual time value
$T^{(k-1)}_{\mathrm{opt}}-\tau_k = \Delta_{\mathrm{opt}}-\Delta>0$. In the
simulations, $\Delta=21$ trading days and $\Delta_{\mathrm{opt}}=30$
calendar days, so the outgoing contract has roughly nine calendar days of
residual time value at every roll. Because the overlay state is
$(S_t,v_t)$ with $v_t>0$ and the residual time to expiry is bounded below
by $\Delta_{\mathrm{opt}}-\Delta>0$, the incumbent price
$P^{(k-1)}(\tau_k,S_{\tau_k},v_{\tau_k})$ is strictly positive almost
surely and the roll ratio below is bounded.

Let $P^{(k)}(t,s,v)$ denote the fixed-maturity price of the $k$-th
contract between $\tau_k$ and $\tau_{k+1}$. The rolled strategy wealth
$P^{\mathrm{roll}}_t$ evolves as $P^{(k)}(t,S_t,v_t)$ on
$(\tau_k,\tau_{k+1}]$, with a self-financing discontinuity at each
$\tau_{k+1}$ that exchanges the still-live incumbent contract for a fresh
one at market. In continuous time,
\begin{equation}
\frac{dP^{\mathrm{roll}}_t}{P^{\mathrm{roll}}_{t^-}}
=
\frac{dP^{(k)}_t}{P^{(k)}_{t^-}}\;\mathbf{1}_{(\tau_k,\tau_{k+1})}(t)
\;+\;
\underbrace{\left(\frac{P^{(k+1)}(\tau_{k+1},S_{\tau_{k+1}},v_{\tau_{k+1}})}
{P^{(k)}(\tau_{k+1},S_{\tau_{k+1}},v_{\tau_{k+1}})}-1\right)}_{
\displaystyle =: \eta_{k+1}}
\;d\mathbf{1}_{\{t\ge\tau_{k+1}\}}.
\label{eq:rolled_strategy}
\end{equation}
Between rolls, $P^{\mathrm{roll}}$ has exactly the dynamics
\eqref{eq:put_return} for a fixed-maturity contract. At each roll
$\tau_{k+1}$, the pre-expiry convention above guarantees
$P^{(k)}(\tau_{k+1},\cdot)>0$, so $\eta_{k+1}$ is a well-defined
finite return. Economically, $\eta_{k+1}$ captures the wealth-fraction
change when the portfolio substitutes the incumbent contract for a fresh
constant-maturity contract at prevailing market prices. All systematic
rolling costs --- transaction costs, bid--ask spread, and the
variance-risk-premium drag cited in Remark~\ref{rem:carry} --- are
absorbed into $\eta_{k+1}$ and into the between-roll drift $\mu_P$. An
equivalent implementation rebalances a fixed wealth fraction (or a fixed
premium budget) into the new contract and charges the residual cash flow
to the risk-free account; under bounded roll ratios the two
implementations coincide at the level of portfolio returns, so we retain
the notional form \eqref{eq:rolled_strategy} for consistency with the
wealth equation below.
Assumption~\ref{ass:overlay} applies to each $P^{(k)}$ individually. The theoretical
HJB analysis below is stated for the fixed-maturity object $P(t,s,v)$; the
rolled-strategy object $P^{\mathrm{roll}}_t$ inherits those dynamics with
the additional scheduled but state-dependent jump term in \eqref{eq:rolled_strategy}. It
can be used directly inside the CVaR problem without modification,
provided the roll dates $\{\tau_k\}$ are fixed in advance and the roll
ratios $\eta_{k+1}$ are bounded (as they are under the pre-expiry
convention). In the stylized simulations we measure $P(t,s,v)$ through a
Black--Scholes proxy calibrated to rolling realized volatility; this is
not consistent with the Heston--jump specification of
\eqref{eq:S}--\eqref{eq:v}, and the magnitudes of carry drag and
crash-state repricing are inherited from this proxy, so the reported CVaR
levels and hybrid-weight location should be read as illustrative rather
than invariant to the pricing model. A full Heston--jump pricing of each
$P^{(k)}$, as part of a follow-up empirical
calibration to a live option surface, is a natural extension.

\subsection{Trend signal and filtering bridge}

Define the exponentially weighted moving-average signal on log-returns by
\begin{equation}
M_t := \int_0^t e^{-\gamma(t-u)}\,dy_u,
\qquad \gamma>0.
\label{eq:Mdef}
\end{equation}
Because $dy_u$ is a log-return increment, $M_t$ is dimensionless and can be fed
directly into a position map $f$. The signal dynamics are
\begin{equation}
 dM_t = dy_t - \gamma M_t\,dt,
 \label{eq:dM}
\end{equation}
so a log-jump of size $\zeta$ moves the signal by exactly the same amount:
$\Delta M_t = \zeta$. This unit consistency is important. It means that the
reversal condition for trend can be stated in the same scale as the jump itself.

Take the directional exposure to be of the form
\begin{equation}
\pi_t = f(M_t),
\label{eq:pit}
\end{equation}
where $f$ is increasing and odd, for example $f(m)=\tanh(\beta m)$.

A complementary interpretation comes from modelling log-returns directly as
noisy observations of a latent trend. Fix a discretization step $\Delta>0$
and let $r_k:=y_{k\Delta}-y_{(k-1)\Delta}$ denote the log-return over the
$k$-th interval. Consider the local-level state-space model
\begin{align}
 r_k &= \mu_k + \varepsilon_k,
 &\varepsilon_k &\sim \mathcal{N}(0,\sigma_\varepsilon^2), \\
 \mu_k &= \mu_{k-1} + \omega_k,
 &\omega_k &\sim \mathcal{N}(0,\sigma_\omega^2),
\end{align}
in which $\mu_k$ is a latent drift rate and $\varepsilon_k,\omega_k$ are
independent Gaussian noises. The steady-state Kalman filter reduces to an
EWMA on returns with constant gain $\kappa^\star\in(0,1)$ determined by the
signal-to-noise ratio $\sigma_\omega^2/\sigma_\varepsilon^2$: the one-step
filtered estimate satisfies $\hat\mu_k=\kappa^\star r_k
+(1-\kappa^\star)\hat\mu_{k-1}$. Summing from the origin shows that
$\hat\mu_k$ is an exponentially weighted moving average of past returns,
which is the continuous-time limit of \eqref{eq:Mdef} up to scaling. This
provides a clean bridge between signal extraction and the control problem:
trend is a filtered estimate of persistent directional motion in returns
rather than an ad hoc trading rule, and the signal is defined on the same
scale (log-returns) on which the control action operates.

\begin{lemma}[Signal mechanics]
\label{lem:signal_mechanics}
Assume $f$ is odd and strictly increasing.
\begin{enumerate}[label=(\roman*),leftmargin=1.6em]
    \item If a log-jump $\zeta<0$ occurs at time $\tau$, then
    $M_\tau=M_{\tau^-}+\zeta$. The position changes sign on impact if and only if
    \begin{equation}
    \zeta < -M_{\tau^-}.
    \label{eq:jump_threshold}
    \end{equation}
    \item Suppose instead that after time $\tau$ the asset experiences a gradual
    deterministic drawdown with $dy_t=-c\,dt$ for $c>0$ and no further jumps.
    If $M_\tau>0$, then for $t\ge \tau$,
    \begin{equation}
    M_t = -\frac{c}{\gamma} + \Bigl(M_\tau+\frac{c}{\gamma}\Bigr)e^{-\gamma(t-\tau)}.
    \label{eq:Mdrawdown}
    \end{equation}
    The signal crosses zero at
    \begin{equation}
    t^* = \tau + \frac{1}{\gamma}\log\Bigl(1+\frac{\gamma M_\tau}{c}\Bigr),
    \label{eq:reversal_time}
    \end{equation}
    provided the drawdown lasts that long.
\end{enumerate}
\end{lemma}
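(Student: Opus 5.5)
Both parts follow directly from the signal dynamics \eqref{eq:dM}. I will treat the jump and the drift separately, and use only oddness and strict monotonicity of $f$.

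For (i), I read the jump of $M$ off \eqref{eq:dM}. The term $-\gamma M_t\,dt$ is absolutely continuous, so it carries no atom, and $\Delta M_\tau=\Delta y_\tau=\zeta$ by \eqref{eq:logS}. This gives $M_\tau=M_{\tau^-}+\zeta$. Next, oddness gives $f(0)=0$, and strict monotonicity then gives $\operatorname{sign} f(m)=\operatorname{sign} m$. Hence the sign of $\pi_t=f(M_t)$ is the sign of $M_t$. The position changes sign on impact exactly when $M_{\tau^-}$ and $M_\tau$ have opposite signs. In the relevant case $M_{\tau^-}\ge 0$ (a long position hit by a negative jump), $M_\tau<0$ is equivalent to $\zeta<-M_{\tau^-}$, which is \eqref{eq:jump_threshold}. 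I will state this sign convention explicitly. If instead $M_{\tau^-}<0$, a negative jump only deepens the short position, and the condition reads the same way once "changes sign" is interpreted as "ends short". Boundary ties at $M_\tau=0$ give a flat position, and I will note this as the degenerate case.

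For (ii), substituting $dy_t=-c\,dt$ into \eqref{eq:dM} gives the linear ODE $\dot M=-c-\gamma M$ on $[\tau,\infty)$. Its equilibrium is $-c/\gamma$, and variation of constants, with $e^{\gamma t}$ as integrating factor, yields \eqref{eq:Mdrawdown}. Equivalently, I can split the integral in \eqref{eq:Mdef} at $\tau$ and integrate $-c\,e^{-\gamma(t-u)}$ over $[\tau,t]$. To find the crossing, I set $M_{t^*}=0$, which gives $e^{-\gamma(t^*-\tau)}=(c/\gamma)/(M_\tau+c/\gamma)$. Taking logarithms gives \eqref{eq:reversal_time}. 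Since $M_\tau>0$, the ratio lies in $(0,1)$, so $t^*>\tau$ is well defined. Because $M_t$ is strictly decreasing, with $\dot M_t=-\gamma(M_t+c/\gamma)<0$, this crossing is the unique zero. By the sign equivalence from part (i), $t^*$ is also the moment the position turns defensive.

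I do not expect a genuine obstacle. The only delicate point is the interpretation in (i): the "if and only if" holds as stated for $M_{\tau^-}\ge0$, and I will make that case split explicit.
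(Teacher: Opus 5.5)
Your proposal is correct and follows the paper's proof. For (i) you read the one-for-one jump $\Delta M_\tau=\zeta$ off \eqref{eq:dM} and use oddness plus strict monotonicity of $f$; for (ii) you solve the linear ODE $\dot M=-c-\gamma M$ and set $M_t=0$. Your explicit case split in (i) is a real refinement: the paper's argument actually proves that the post-jump position is short iff $\zeta<-M_{\tau^-}$, and this coincides with ``changes sign'' only when $M_{\tau^-}\ge 0$, as you note.
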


\begin{proof}
Part (i) is immediate from \eqref{eq:dM}: a log-jump of size $\zeta$ enters both
$dy_t$ and $M_t$ one-for-one. Because $f$ is odd and strictly increasing,
$f(M_{\tau^-}+\zeta)<0$ if and only if $M_{\tau^-}+\zeta<0$.

For part (ii), under $dy_t=-c\,dt$ and no jumps, \eqref{eq:dM} becomes the ODE
$dM_t/dt = -c-\gamma M_t$. Solving it on $[\tau,\infty)$ gives
\eqref{eq:Mdrawdown}. Setting $M_t=0$ yields \eqref{eq:reversal_time}.
\end{proof}

\begin{figure}[tbp]
\centering
\includegraphics[width=0.96\textwidth]{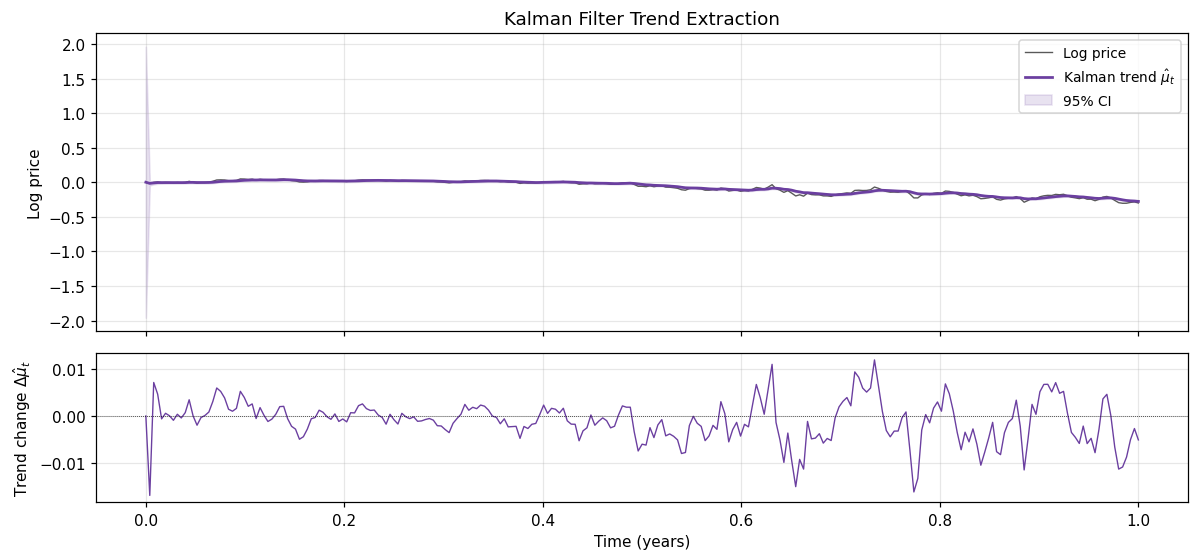}
\caption{Kalman-filter trend extraction on a representative simulated path from
our experiments. The top panel shows the log price together with the filtered
latent trend and a 95\% band. The bottom panel shows filtered trend increments.
At steady state the Kalman recursion is an EWMA, which is the signal used in the
control problem.}
\label{fig:kalman}
\end{figure}

\FloatBarrier
\section{Tail-risk objective}
\label{sec:cvar}

We work throughout with the log-loss convention
\begin{equation}
\label{eq:loss_convention}
L := -\log(X_T/X_0),
\end{equation}
i.e.\ terminal loss is measured on the continuously compounded scale. This
is equivalent to $L=-\log X_T$ when $X_0=1$ (which we assume without loss of
generality), and matches the scale on which the stylized Monte Carlo CVaRs
reported in \Cref{sec:mc} are computed. The log-loss convention is also
convenient because post-jump wealth is strictly positive under
Lemma~\ref{lem:positivity}, so $\log X_T$ is almost surely finite. For a
confidence level $\alpha\in(0,1)$, the Value-at-Risk and Conditional
Value-at-Risk are
\begin{equation}
\VaR_\alpha(L) := \inf\{\ell\in\R:\Prob(L\le \ell)\ge \alpha\},
\label{eq:VaR}
\end{equation}
and
\begin{equation}
\CVaR_\alpha(L) := \E\bigl[L\mid L\ge \VaR_\alpha(L)\bigr].
\label{eq:CVaR}
\end{equation}
The conditional-expectation form \eqref{eq:CVaR} requires the loss distribution
to be continuous at the $\alpha$-quantile. The Rockafellar--Uryasev
representation below is valid without this restriction and is the formulation
used throughout the paper.
The representation of \citet{Rockafellar2000} is central:
\begin{equation}
\CVaR_\alpha(L)
= \min_{\eta\in\R}
\left\{
\eta + \frac{1}{1-\alpha}\E\bigl[(L-\eta)^+\bigr]
\right\}.
\label{eq:RU}
\end{equation}
It converts a tail expectation into a stochastic program with an auxiliary scalar
$\eta$, which is exactly what makes dynamic programming feasible.

\begin{remark}[Pre-commitment CVaR]
\label{rem:precommitment}
Terminal CVaR is not in general a dynamically time-consistent risk measure:
an optimal policy for a terminal-$\CVaR$ problem at time $t=0$ need not
remain optimal for the corresponding tail problem at a later time
$t\in(0,T)$ \citep{Artzner1999,Rockafellar2000}. The framework in this
paper is therefore interpreted as a \emph{pre-commitment} problem: at
$t=0$ the investor commits to a feedback policy $(\pi_t,q_t)$ and an
auxiliary threshold $\eta$ by jointly minimizing the Rockafellar--Uryasev
expression in \eqref{eq:RU}; for any fixed $\eta$, the inner expectation
is a standard terminal-cost expected-value problem and admits a dynamic
programming representation, which we exploit in the HJB analysis below.
This is consistent with almost all of the applied CVaR control literature.
Fully time-consistent dynamic risk measures --- such as recursive
composition of one-period CVaR kernels --- are a separate construction and
are not the object of this paper.
\end{remark}

\subsection{Diagnostic decomposition of hedge quality}
\label{sec:diagnostics}

The CVaR objective is deliberately scalar: it ranks policies by the expected loss
inside the left tail. For interpretation and structuring, however, it is useful
to decompose the protection mechanism into observable hedge-quality dimensions.
Fix a baseline market portfolio with terminal loss $L^0$ and define the stress
set
\begin{equation}
\mathcal{A}_\alpha^0:=\{L^0\ge \VaR_\alpha(L^0)\}.
\label{eq:stressset}
\end{equation}
For a tradable hedge sleeve $i$ with horizon return $R_i$ and market return
$R_M$, define the downside and normal-state betas
\begin{equation}
\beta_i^-:=
\frac{\operatorname{Cov}(R_i,R_M\mid R_M\le q_\alpha^M)}
{\operatorname{Var}(R_M\mid R_M\le q_\alpha^M)},
\qquad
\beta_i^0:=
\frac{\operatorname{Cov}(R_i,R_M\mid R_M> q_\alpha^M)}
{\operatorname{Var}(R_M\mid R_M> q_\alpha^M)},
\label{eq:downside_beta}
\end{equation}
where $q_\alpha^M$ is a lower-tail market-return quantile. A simple empirical
conditional-convexity score is then
\begin{equation}
\mathsf{Conv}_{i,\alpha}:=-(\beta_i^- - \beta_i^0).
\label{eq:conv_score}
\end{equation}
For a protective hedge, a high value means that market beta becomes more negative
in bad states. This is an empirical co-movement proxy for conditional convexity;
in the option component of the model, the structural source is the jump repricing
$R_P(t,s,v;\zeta)$ and the curvature of $P$.

The other three dimensions can be measured on the same simulated or empirical
paths:
\begin{align}
\mathsf{Hit}_{i,\alpha}
&:=\Prob(R_i>0\mid \mathcal{A}_\alpha^0),
\label{eq:hit}\\
\mathsf{Carry}_{i,\alpha}
&:=-\E[R_i\mid (\mathcal{A}_\alpha^0)^c],
\label{eq:carrydiag}\\
\mathsf{Pers}_{i,h,d}
&:=\Prob\left(\sum_{j=1}^{h}R_{i,t+j}>0\,\middle|\,\mathcal{D}_{t,h,d}\right),
\label{eq:persistence}
\end{align}
where $\mathcal{D}_{t,h,d}$ denotes a drawdown episode of duration $h$ and depth
at least $d$. Thus reliability is the tail-event hit ratio, carry is the
non-stress drag of the sleeve, and persistence is the probability that the hedge
continues to help over a multi-period drawdown rather than only on the initial
shock.

\begin{table}[tbp]
\centering
\caption{Four hedge-quality dimensions and their role in the present model.}
\label{tab:diagnostics}
\begin{threeparttable}
\begin{adjustbox}{max width=0.99\textwidth}
\begin{tabularx}{\textwidth}{p{0.22\textwidth}X X}
\toprule
Dimension & Diagnostic object & Model counterpart \\
\midrule
Conditional convexity & Downside co-movement asymmetry, measured for example by $\mathsf{Conv}_{i,\alpha}$ & Put jump repricing $R_P(t,s,v;\zeta)$, option elasticity $\delta_P$, and curvature of $P$ \\
Reliability & Tail hit ratio $\mathsf{Hit}_{i,\alpha}=\Prob(R_i>0\mid \mathcal{A}_\alpha^0)$ & Whether the sleeve pays in the same states that determine CVaR \\
Carry & Non-stress expected drag $\mathsf{Carry}_{i,\alpha}$ & Physical overlay drift $\mu_P$ and the premium drag of rolled convex insurance \\
Persistence & Multi-period drawdown protection $\mathsf{Pers}_{i,h,d}$ & Signal half-life $\gamma^{-1}$, reversal time $t^*-\tau$, and trend exposure $f(M_t)$ \\
\bottomrule
\end{tabularx}
\end{adjustbox}
\begin{tablenotes}[flushleft]
\footnotesize
\item These diagnostics do not replace CVaR. They explain which mechanism reduces
CVaR and provide economically interpretable constraints for implementation.
\end{tablenotes}
\end{threeparttable}
\end{table}

This decomposition leads to two implementable variants of the baseline problem.
The first is a constrained version,
\begin{equation}
\begin{aligned}
\min_{\theta}\quad & \CVaR_\alpha(L(\theta)) \\
\text{s.t.}\quad
& \mathsf{Conv}_{\theta,\alpha}\ge \underline C,
\qquad
\mathsf{Hit}_{\theta,\alpha}\ge \underline H, \\
& \mathsf{Carry}_{\theta,\alpha}\le \overline K,
\qquad
\mathsf{Pers}_{\theta,h,d}\ge \underline P,
\end{aligned}
\label{eq:constrained_diagnostic}
\end{equation}
where $\theta$ parameterizes the hedge policy or sleeve weights. The second is a
penalized multi-objective version,
\begin{equation}
\min_{\theta}\;
\CVaR_\alpha(L(\theta))
+\lambda_K\mathsf{Carry}_{\theta,\alpha}
+\lambda_F\bigl(1-\mathsf{Hit}_{\theta,\alpha}\bigr)
+\lambda_D\bigl(1-\mathsf{Pers}_{\theta,h,d}\bigr)
-\lambda_C\mathsf{Conv}_{\theta,\alpha}.
\label{eq:penalized_diagnostic}
\end{equation}
The multipliers have a direct economic interpretation. A crash-sensitive mandate
places high weight on conditional convexity and hit ratio; a cost-sensitive
mandate puts more weight on carry; and a mandate concerned with long bear markets
places more weight on persistence. In numerical work, the non-smooth indicators
in \eqref{eq:hit} and \eqref{eq:persistence} can be replaced by smooth logistic
approximations, or handled directly by simulation-based constrained optimization.

\section{Dynamic control problem}
\label{sec:control}

\subsection{Wealth dynamics and Markov state}

Let $X_t$ denote wealth. Given directional exposure $\pi_t$ and convex-overlay
exposure $q_t$, wealth evolves as
\begin{equation}
\frac{dX_t}{X_{t^-}} = \pi_t\frac{dS_t}{S_{t^-}} + q_t\frac{dP_t}{P_{t^-}}.
\label{eq:Xbasic}
\end{equation}
Using \eqref{eq:S} and \eqref{eq:put_return},
\begin{align}
\frac{dX_t}{X_{t^-}}
&= \bigl[\pi_t\mu + q_t\mu_P(t,S_t,v_t)\bigr]dt \notag\\
&\quad + \bigl[\pi_t + q_t\delta_P(t,S_t,v_t)\bigr]\sqrt{v_t}\,dW_t^S
+ q_t\nu_P(t,S_t,v_t)\xi\sqrt{v_t}\,dZ_t \notag\\
&\quad + \bigl[\pi_t(e^{\zeta}-1)+q_tR_P(t,S_{t^-},v_t;\zeta)\bigr]dN_t.
\label{eq:Xexpanded}
\end{align}
The immediate post-jump wealth map is therefore
\begin{equation}
X_t = X_{t^-}\Bigl(1+\pi_t(e^{\zeta}-1)+q_tR_P(t,S_{t^-},v_t;\zeta)\Bigr).
\label{eq:jumpwealth}
\end{equation}

\begin{lemma}[Strict wealth positivity under two-sided directional exposure]
\label{lem:positivity}
Assume Assumption~\ref{ass:admissible} and the overlay bound $R_P(t,s,v;\zeta)\ge -1$
for all admissible $(t,s,v,\zeta)$. Then the post-jump wealth multiplier in
\eqref{eq:jumpwealth} satisfies, for every admissible $(\pi_t,q_t)$ and every
$\zeta\in[\underline\zeta,\overline\zeta]$,
\[
1+\pi_t(e^{\zeta}-1)+q_tR_P(t,S_{t^-},v_t;\zeta)
\;\ge\;
\min\!\Bigl\{
1-\overline\pi-\overline q,\;
1+\underline\pi(e^{\overline\zeta}-1)-\overline q
\Bigr\}
\;>\;0,
\]
where the second inequality uses the leverage condition
\eqref{eq:leverage_cond}. Consequently, if $X_0>0$, the wealth process
remains strictly positive almost surely on $[0,T]$.
\end{lemma}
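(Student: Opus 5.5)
The plan is to prove the pointwise lower bound on the jump multiplier by splitting on the sign of $\pi_t$, and then to obtain strict positivity of wealth from the Doléans-Dade representation of the solution of \eqref{eq:Xexpanded}.

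\emph{Step 1 (overlay term).} Since $q_t\in[0,\overline q]$ and $R_P\ge -1$ by hypothesis, we have $q_tR_P(t,S_{t^-},v_t;\zeta)\ge -q_t\ge -\overline q$. This holds uniformly in $(t,s,v,\zeta)$, so it can be handled separately from the directional term.

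\emph{Step 2 (directional term, two cases).} If $\pi_t\ge 0$, use $e^{\zeta}-1\ge -1$, which holds for every $\zeta\in\R$ and in particular on $[\underline\zeta,\overline\zeta]$. This gives $\pi_t(e^{\zeta}-1)\ge -\pi_t\ge -\overline\pi$, so the multiplier is at least $1-\overline\pi-\overline q$. If $\pi_t<0$, the map $\zeta\mapsto\pi_t(e^{\zeta}-1)$ is decreasing. Its minimum over the support is therefore attained at $\overline\zeta$, which gives $\pi_t(e^{\zeta}-1)\ge \pi_t(e^{\overline\zeta}-1)$. Because $e^{\overline\zeta}-1>0$ (recall $\overline\zeta>0$) and $\pi_t\ge\underline\pi$, this is in turn at least $\underline\pi(e^{\overline\zeta}-1)$. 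The multiplier is then at least $1+\underline\pi(e^{\overline\zeta}-1)-\overline q$. Taking the minimum of the two case bounds yields the displayed inequality. Strict positivity of both entries is exactly \eqref{eq:leverage_cond}; note $1-\overline\pi-\overline q>0$ is the first condition rewritten.

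\emph{Step 3 (pathwise positivity).} Write \eqref{eq:Xexpanded} as $dX_t=X_{t^-}\,dY_t$, where $Y$ is the semimartingale collecting the drift, the two Brownian integrals and the compound-Poisson jump term. The Doléans-Dade formula then gives
\[
X_t = X_0\exp\Bigl(Y^c_t-\tfrac12\langle Y^c\rangle_t\Bigr)\prod_{s\le t}(1+\Delta Y_s).
\]
Here $Y^c$ is the continuous part, and $\Delta Y_s$ is the jump at time $s$, which equals the multiplier of \eqref{eq:jumpwealth} minus one. Since $N$ has finitely many jumps on $[0,T]$ almost surely, the product is finite. By Step 2 each of its factors is bounded below by the positive constant $\min\{\cdots\}$. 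The exponential factor is strictly positive whenever it is finite. Hence $X_t>0$ for all $t\in[0,T]$ almost surely when $X_0>0$.

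\emph{Main obstacle.} The delicate point is Step 3: we need $Y^c$ to be a well-defined, almost surely finite semimartingale. Concretely, we need $\int_0^T[(\pi_t+q_t\delta_P)^2+q_t^2\nu_P^2\xi^2]v_t\,dt<\infty$ and $\int_0^T|\pi_t\mu+q_t\mu_P|\,dt<\infty$ almost surely. I would first argue this from Assumption~\ref{ass:overlay}: $P>0$ is $C^{1,2}$ with locally bounded derivatives, and $(S_t,v_t)$ stays in compact subsets of $(0,\infty)^2$ on $[0,T]$ pathwise, using the Feller condition for $v$ and the bounded jump support for $S$. Together with the boundedness of the controls on $\mathcal K$, this makes $\delta_P,\nu_P,\mu_P$ pathwise bounded. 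One caveat is that if $T_{\mathrm{opt}}\le T$, the derivatives may degenerate near maturity. I would handle this either by restricting the overlay to $[0,T_{\mathrm{opt}})$ or by appealing to the rolled implementation, whose pre-expiry roll convention keeps residual maturity bounded away from zero. In the latter case the additional roll jumps $\eta_{k+1}$ are finitely many factors $1+\eta_{k+1}=P^{(k+1)}/P^{(k)}>0$ and do not affect positivity.
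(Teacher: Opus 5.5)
Your proposal is correct and follows the paper's proof: the same case split on the sign of $\pi_t$, the uniform overlay bound $q_tR_P\ge-\overline q$, and the stochastic-exponential argument that continuous evolution preserves positivity while each jump multiplies wealth by a factor bounded below by the positive constant. Your explicit Dol\'eans-Dade product and the integrability discussion for the continuous part, including the near-maturity caveat, add rigor that the paper takes for granted; one small fix there is that in portfolio wealth a roll contributes the factor $1+q\,\eta_{k+1}=(1-q)+q(1+\eta_{k+1})>0$, not $1+\eta_{k+1}$.
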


\begin{proof}
Between jump times, $X$ is the stochastic exponential of a continuous
semimartingale and is therefore strictly positive as long as it starts
positive. At a jump time, the multiplier
$m(\zeta):=1+\pi_t(e^{\zeta}-1)+q_tR_P(t,S_{t^-},v_t;\zeta)$ is the sum of
three terms. The worst case for a long position ($\pi_t\ge 0$) is
$\zeta=\underline\zeta$, which gives the bound $1-\overline\pi-\overline q$
because $e^{\underline\zeta}-1\ge -1$ and $R_P\ge -1$. The worst case for a
short position ($\pi_t<0$) is $\zeta=\overline\zeta$, which gives the bound
$1+\underline\pi(e^{\overline\zeta}-1)-\overline q$; this is strictly positive
by \eqref{eq:leverage_cond}. Both bounds depend only on fixed admissible
parameters, so the multiplier is bounded away from zero uniformly in $\zeta$
and in $(\pi_t,q_t)\in\mathcal{K}$, which preserves positivity at every jump.
\end{proof}

\begin{remark}[Economic interpretation of the leverage condition]
The two inequalities in \eqref{eq:leverage_cond} handle the two worst-case
jump directions separately. The first, $\overline\pi+\overline q<1$,
restricts how much simultaneous long directional and overlay exposure can
be admitted: in the worst-case downside jump
($e^{\underline\zeta}-1\ge -1$ combined with $R_P\ge -1$) the post-jump
multiplier collapses to $1-\overline\pi-\overline q$, which must be strictly
positive. The second inequality is the symmetric constraint for a short
directional position hit by the worst upside jump: it couples the lower
directional bound $\underline\pi<0$, the upside truncation $\overline\zeta$,
and the overlay upper bound $\overline q$. Both inequalities are non-binding
in the stylized calibration of \Cref{tab:params}: the annualized jump
intensity and mean log-jump are negative, and the upside truncation
$\overline\zeta$ sits several standard deviations above the mean, while the
numerical experiments take $\overline\pi+\overline q$ well below $1$. If the
modeller wishes to permit stronger directional or overlay positions, the
bounds must be reconciled through \eqref{eq:leverage_cond}, e.g.\ by
lowering $\overline q$ (reducing overlay concentration) or by tightening
$\overline\pi$ or $\underline\pi$. The alternative --- leaving jumps
unbounded and admitting wealth default --- requires a different admissible
class than the one used here.
\end{remark}

The natural Markov state is now explicit:
\begin{equation}
U_t := (X_t,S_t,v_t,M_t) \in (0,\infty)^3\times\R.
\end{equation}
Spot must be included separately from wealth because the overlay return depends
on the current asset price even when the investor is hedged. This is the main
state-variable correction relative to simplified treatments that try to infer the
underlying level from wealth alone.

\subsection{HJB characterization}

Fix the auxiliary parameter $\eta\in\R$ from \eqref{eq:RU}. For
$u=(x,s,v,m)$ define
\begin{equation}
V(t,u;\eta)
:= \inf_{(\pi,q)\in\mathcal{A}}
\left\{
\eta + \frac{1}{1-\alpha}\E\bigl[(-\log X_T-\eta)^+\mid U_t=u\bigr]
\right\},
\label{eq:Vdef}
\end{equation}
where $\mathcal{A}$ is the admissible class induced by Assumption~\ref{ass:admissible}
and $X_0=1$ so that $L=-\log X_T$ as in \eqref{eq:loss_convention}.
The full CVaR is recovered by the outer minimization
$\CVaR_\alpha(L)=\min_{\eta\in\R}V(t,U_t;\eta)$, so the problem has a
two-level structure: the HJB below solves the inner policy optimization for each
fixed $\eta$, and $\eta$ is chosen in an outer step.

To write the generator compactly, define the drift vector
\begin{equation}
b^{\pi,q}(t,u):=
\begin{pmatrix}
 x\bigl(\pi\mu + q\mu_P(t,s,v)\bigr) \\
 \mu s \\
 \kappa(\theta-v) \\
 \mu - \tfrac12 v - \gamma m
\end{pmatrix}
\label{eq:driftvec}
\end{equation}
and the diffusion loading matrix with respect to the correlated Brownian pair
$(W^S,Z)$,
\begin{equation}
\Sigma^{\pi,q}(t,u):=
\sqrt{v}
\begin{pmatrix}
 x(\pi+q\delta_P(t,s,v)) & xq\nu_P(t,s,v)\xi \\
 s & 0 \\
 0 & \xi \\
 1 & 0
\end{pmatrix},
\qquad
C:=\begin{pmatrix}1&\rho\\ \rho&1\end{pmatrix}.
\label{eq:sigma}
\end{equation}
The local covariance matrix is $a^{\pi,q}(t,u):=\Sigma^{\pi,q}(t,u)C\Sigma^{\pi,q}(t,u)^\top$.
Finally, define the jump map
\begin{equation}
\Gamma^{\pi,q}(t,u;\zeta)
:= \Bigl(x\bigl(1+\pi(e^{\zeta}-1)+qR_P(t,s,v;\zeta)\bigr),\; se^{\zeta},\; v,\; m+\zeta\Bigr).
\label{eq:jumpmap}
\end{equation}

\begin{proposition}[HJB equation in viscosity form]
\label{prop:HJB}
Assume Assumptions~\ref{ass:admissible} and~\ref{ass:overlay}. Suppose the coefficients of the state
process satisfy the usual local Lipschitz and linear-growth conditions and that
$V$ in \eqref{eq:Vdef} is finite and continuous. Then $V$ is a viscosity
solution of
\begin{equation}
\partial_t V + \inf_{(\pi,q)\in\mathcal{K}}
\Bigl\{
 b^{\pi,q}(t,u)\cdot \nabla_u V
 + \tfrac12\operatorname{tr}\bigl(a^{\pi,q}(t,u)D_u^2V\bigr)
 + \lambda\E_{\zeta}\bigl[V(t,\Gamma^{\pi,q}(t,u;\zeta);\eta)-V(t,u;\eta)\bigr]
\Bigr\}=0,
\label{eq:HJB}
\end{equation}
with terminal condition
\begin{equation}
V(T,u;\eta)=\eta+\frac{1}{1-\alpha}(-\log x-\eta)^+.
\label{eq:terminal}
\end{equation}
Under a comparison principle for the associated integro-PDE, this viscosity
solution is unique.
\end{proposition}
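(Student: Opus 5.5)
The plan is the standard route from the dynamic programming principle (DPP) to viscosity sub- and supersolution inequalities, adapted to jump diffusions. For fixed $\eta$ the inner problem \eqref{eq:Vdef} is an ordinary finite-horizon expected-cost control problem with Markov state $U_t=(X_t,S_t,v_t,M_t)$. By \eqref{eq:Xexpanded}, \eqref{eq:logS}, \eqref{eq:dM} and Proposition~\ref{prop:option_decomp}, this state is driven by $(W^S,Z)$ and the compound Poisson measure of $(N,\zeta)$. Its drift is $b^{\pi,q}$, its diffusion loading is $\Sigma^{\pi,q}$ with correlation $C$, and its jump map is $\Gamma^{\pi,q}$.

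\textbf{Step 1: DPP.} I first record that the terminal condition \eqref{eq:terminal} holds trivially at $t=T$. Lemma~\ref{lem:positivity} guarantees $X_T>0$, so the terminal cost $\eta+\frac1{1-\alpha}(-\log x-\eta)^+$ is finite and Lipschitz in $\log x$. Next I invoke the DPP: for any stopping time $\tau\in[t,T]$,
\[
V(t,u;\eta)=\inf_{(\pi,q)}\E\bigl[V(\tau,U_\tau;\eta)\mid U_t=u\bigr].
\]
Given compact controls $\mathcal{K}$, the assumed local Lipschitz and linear-growth coefficients, and the assumed continuity of $V$, this follows from the standard results in \citet{Pham2009} and \citet{Oksendal2005}. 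I would cite it rather than reprove it. Continuity of $V$ is what lets us avoid measurable-selection subtleties.

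\textbf{Step 2: subsolution.} Take a smooth test function $\varphi$ with $V-\varphi$ attaining a local maximum equal to zero at $(t_0,u_0)$. Because of the jump term, I modify $\varphi$ outside a neighbourhood so that $\varphi\ge V$ globally with equality at $(t_0,u_0)$. This uses the nonlocal-term convention of Barles--Imbert, or alternatively evaluates the nonlocal term with $V$ itself, which is permitted because $V$ is continuous. Then I fix a constant control $(\pi,q)\in\mathcal{K}$ on $[t_0,t_0+h\wedge\tau_\epsilon]$, where $\tau_\epsilon$ is the exit time from a small ball. The DPP gives $\varphi(t_0,u_0)\le\E[\varphi(t_0+h\wedge\tau_\epsilon,U)]$. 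Applying It\^o's formula for jump diffusions to $\varphi(t,U_t)$ produces the generator
\[
\partial_t\varphi+b\cdot\nabla\varphi+\tfrac12\operatorname{tr}(aD^2\varphi)+\lambda\E_\zeta[\varphi(\Gamma)-\varphi].
\]
Dividing by $h$ and letting $h\to0$ shows that this generator is at least $0$ for every $(\pi,q)$, and hence so is its infimum over $\mathcal{K}$.

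\textbf{Step 3: supersolution.} This direction is argued by contradiction. Suppose the infimum of the generator is at most $-\epsilon<0$ at a local minimum point. By continuity of the coefficients in $(t,u)$ and compactness of $\mathcal{K}$, the inequality then holds uniformly on a neighbourhood, for all controls. For an $\epsilon h$-optimal control from the DPP, Dynkin's formula then yields a strict contradiction.

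\textbf{Step 4: uniqueness.} Uniqueness is immediate from the assumed comparison principle.

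\textbf{Main obstacle.} The hardest part is controlling the nonlocal term and the localization near the boundaries. Three issues need care. First, the martingale parts of It\^o's formula must vanish in expectation. This works because the process is stopped before exiting a ball in $(0,\infty)^3\times\R$, and the jumps have bounded support by \eqref{eq:zeta_bounded}, so $\Gamma$ maps compacts to compacts. On the wealth coordinate this uses the uniform lower bound of Lemma~\ref{lem:positivity}, and on $s$ it uses $e^{\zeta}\in[e^{\underline\zeta},e^{\overline\zeta}]$. Second, $\mu_P,\delta_P,\nu_P,R_P$ must be continuous and locally bounded, which follows from Assumption~\ref{ass:overlay} and $P>0$. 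Third, the boundary $v=0$ is never reached, by the Feller condition. I would also restrict to $t<T_{\mathrm{opt}}$, where the derivatives of $P$ are controlled, treating the rolled case piecewise between roll dates.
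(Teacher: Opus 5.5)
Your route is the paper's: fix $\eta$, treat the inner problem as an ordinary expected-cost problem for the controlled Markov jump diffusion $U_t=(X_t,S_t,v_t,M_t)$, and invoke the DPP from the standard references. You then derive the two viscosity inequalities and get uniqueness only from the assumed comparison principle. Your localization remarks go beyond the paper's sketch, which only cites the literature: bounded jump support plus Lemma~\ref{lem:positivity} (so $\Gamma^{\pi,q}$ maps compact subsets of $(0,\infty)^3\times\R$ into compact subsets), continuity of $\mu_P,\delta_P,\nu_P,R_P$, the Feller condition, and $t<T_{\mathrm{opt}}$. Step~2 is correct: under the sign convention of \eqref{eq:HJB}, constant controls give the subsolution inequality $\partial_t\varphi+\inf_{\mathcal{K}}\mathcal{L}^{\pi,q}\varphi\ge 0$ at local maxima of $V-\varphi$.

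Step~3, however, fails as written because the contradiction hypothesis has the wrong sign. At a local minimum $(t_0,u_0)$ of $V-\varphi$ (with $\varphi\le V$ and equality there), the supersolution inequality to prove is
\[
\partial_t\varphi(t_0,u_0)+\inf_{(\pi,q)\in\mathcal{K}}(\mathcal{L}^{\pi,q}\varphi)(t_0,u_0)\le 0 .
\]
If you assume the infimum is at most $-\epsilon$, you only get \emph{some} control with a negative generator, not an inequality for all controls. You also get no contradiction: Dynkin gives $\E[\varphi(\theta,U_\theta)]\le\varphi(t_0,u_0)-\epsilon\,\E[\theta-t_0]$, which is compatible with both $\varphi\le V$ and the DPP.

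The correct hypothesis is $\partial_t\varphi+\inf_{\mathcal{K}}\mathcal{L}^{\pi,q}\varphi\ge 2\epsilon>0$ at $(t_0,u_0)$. Joint continuity in $(t,u,\pi,q)$ and compactness of $\mathcal{K}$ then give $\partial_t\varphi+\mathcal{L}^{\pi,q}\varphi\ge\epsilon$ on a neighbourhood $B$ for \emph{every} $(\pi,q)$. This is what you need, because the near-optimal control supplied by the DPP is not known in advance. For any admissible control and $\theta=(t_0+h)\wedge\tau_B$, Dynkin and $\varphi\le V$ give $\E[V(\theta,U_\theta)]\ge V(t_0,u_0)+\epsilon\,\E[\theta-t_0]$. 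Moreover $\E[\theta-t_0]\ge h(1-Ch)$, since the probability of exiting $B$ before $t_0+h$ is $O(h)$ uniformly over $\mathcal{K}$ (bounded coefficients on $B$, finite jump intensity $\lambda$). Choosing a $\tfrac12\epsilon h$-optimal control in the DPP then gives a contradiction for small $h$. With that sign corrected, your proposal is a complete and more detailed version of the paper's argument.
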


\begin{proof}[Proof sketch]
Under the stated regularity conditions, $U_t$ is a controlled Markov jump
diffusion. The dynamic-programming principle for controlled Markov processes
then yields \eqref{eq:HJB} with the generator built from the drift,
diffusion covariance, and jump map above; see \citet{FlemingSoner2006},
\citet{Pham2009}, and \citet{Oksendal2005}. The terminal cost is kinked because
of the positive-part function in \eqref{eq:RU}, so viscosity rather than
classical solutions are the natural solution concept. A comparison principle for
the resulting integro-PDE is a standard additional assumption rather than a consequence of bounded jumps alone.
Under compact controls, continuous coefficients with the growth bounds stated
above, finite-activity bounded jumps, and an admissible polynomial-growth class
for the value function, the required comparison result follows from the
nonlocal maximum principle for semicontinuous functions of
\citet{JakobsenKarlsen2006} and the Jensen--Ishii lemma for
integro-differential equations of \citet{BarlesImbert2008}. Bounded jump support
is used here to control the nonlocal term and finite moments of the jump
measure; the remaining continuity and growth hypotheses are part of the
standard viscosity-solution comparison framework for controlled jump diffusions.
\end{proof}

\begin{remark}[Why this accounting matters]
\label{rem:accounting}
The option overlay affects the HJB through \emph{all} three channels: drift
(carry), diffusion exposure, and jump repricing. This is exactly what a serious
tail-risk-management model must capture. A terminal-payoff expression of the form
$X_T=S_T+\theta(K-S_T)^+$ is useful for intuition, but it is not a self-financing
wealth equation and cannot by itself support dynamic optimization.
\end{remark}

\begin{remark}[Where the four diagnostics sit in the HJB]
\label{rem:diagnostics_hjb}
The HJB still optimizes a single CVaR objective. The diagnostic quantities in
\Cref{sec:diagnostics} are not additional state variables unless the modeller
chooses to impose them dynamically. They are lower-dimensional summaries of the
same simulated or controlled return distribution: jump repricing maps into
conditional convexity and hit ratio, the drift term maps into carry, and the
signal dynamics map into persistence. This is why the diagnostic layer improves
interpretability without changing the core state equation.
\end{remark}

\begin{proposition}[Classical verification for smooth candidates]
\label{prop:verification}
Let $E := (0,\infty)^3 \times \R$. For any smooth test function $\phi$, define the controlled integro-differential operator
\begin{align*}
(\mathcal{L}^{\pi,q}\phi)(t,u)
&:=
b_{\pi,q}(t,u)\cdot \nabla_u \phi(t,u)
+
\tfrac{1}{2}\operatorname{tr}\!\big(a_{\pi,q}(t,u) D_u^2 \phi(t,u)\big) \\
&\quad
+
\lambda \E_\zeta\!\left[
\phi\bigl(t,\Gamma_{\pi,q}(t,u;\zeta)\bigr) - \phi(t,u)
\right].
\end{align*}

Fix $\eta \in \R$. Suppose there exists
\[
W(\cdot,\cdot;\eta) \in C^{1,2}([0,T)\times E)\cap C([0,T]\times E)
\]
with polynomial growth such that
\[
\partial_t W(t,u;\eta)
+
\inf_{(\pi,q)\in\mathcal{K}}
(\mathcal{L}^{\pi,q}W)(t,u;\eta)
= 0
\]
for $(t,u)\in [0,T)\times E$, together with terminal condition
\[
W(T,u;\eta)
=
\eta + \frac{1}{1-\alpha}(-\log x-\eta)_+,
\qquad
u=(x,s,v,m).
\]
Assume also that the infimum is attained by a measurable selector
\[
(\pi^*,q^*) = (\pi^*,q^*)(t,u),
\]
that the associated controlled state process is well posed, and that the stopped local martingales generated by It\^o's formula are uniformly integrable. Then
\[
W(t,u;\eta)=V(t,u;\eta),
\]
and the feedback control
\[
(\pi_t^*,q_t^*)=(\pi^*,q^*)(t,U_t)
\]
is optimal for the inner problem.
\end{proposition}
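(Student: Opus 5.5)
The plan is the standard two-sided verification argument built on It\^o's formula for jump diffusions. Fix $\eta$, an initial point $(t,u)\in[0,T)\times E$ and an arbitrary admissible control $c=(\pi,q)\in\mathcal{A}$. Let $U^c$ be the controlled state started at $U_t=u$. By Lemma~\ref{lem:positivity}, $X^c$ stays in $(0,\infty)$, and the Feller condition keeps $v$ positive, so $U^c$ never leaves $E$. Introduce localizing stopping times $\tau_n:=\inf\{r\ge t: |U^c_r|\ge n \text{ or } X^c_r\le 1/n \text{ or } v_r\le 1/n\}\wedge(T-1/n)$. On $[t,\tau_n]$ the state lies in a compact subset of $E$ where $W\in C^{1,2}$ and its derivatives are bounded. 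Apply It\^o's formula for semimartingales with finite-activity jumps to $W(r,U^c_r;\eta)$, writing the jump part with the compensated measure $d\widetilde N_r=dN_r-\lambda\,dr$. This gives
\[
W(\tau_n,U^c_{\tau_n})=W(t,u)+\int_t^{\tau_n}\bigl(\partial_t W+\mathcal{L}^{\pi_r,q_r}W\bigr)(r,U^c_r)\,dr+\mathcal{M}^n_{\tau_n},
\]
where $\mathcal{M}^n$ collects the two Brownian stochastic integrals (loadings from $\Sigma^{\pi,q}$) and the compensated jump integral $\int\bigl[W(r,\Gamma^{\pi,q}(r,U_{r^-};\zeta))-W(r,U_{r^-})\bigr]d\widetilde N_r$. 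The drift and jump compensator reassemble exactly into $\mathcal{L}^{\pi,q}$ because of the definitions \eqref{eq:driftvec}, \eqref{eq:sigma}, \eqref{eq:jumpmap} and the raw-jump convention of \eqref{eq:Xexpanded}. On the localized interval the integrands are bounded, so $\mathcal{M}^n$ is a true martingale with zero expectation.

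For the inequality $W\le V$, use the HJB: $\partial_tW+\mathcal{L}^{\pi,q}W\ge \partial_tW+\inf_{\mathcal{K}}\mathcal{L}W=0$ for every $(\pi,q)\in\mathcal{K}$. Taking expectations gives $W(t,u)\le\E[W(\tau_n,U^c_{\tau_n})]$. Now let $n\to\infty$. We have $\tau_n\uparrow T$ a.s., since the paths do not explode and stay in $E$. By continuity of $W$ on $[0,T]\times E$ and quasi-left-continuity at the deterministic time $T$ (no jump at $T$ a.s. for a Poisson clock), $W(\tau_n,U^c_{\tau_n})\to W(T,U^c_T)$ a.s. The assumed uniform integrability of the stopped processes lets us pass to the limit. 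Combined with the terminal condition, this yields $W(t,u)\le \eta+\frac{1}{1-\alpha}\E[(-\log X^c_T-\eta)^+]$. Taking the infimum over $c$ gives $W\le V$.

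For the reverse inequality, use the measurable selector: set $c^*_r=(\pi^*,q^*)(r,U^*_r)$, where $U^*$ is the well-posed feedback state. First check that $c^*$ belongs to $\mathcal{A}$. It is progressively measurable as a measurable function of a càdlàg adapted process, it is $\mathcal{K}$-valued, and the square-integrability condition is automatic from compactness of $\mathcal{K}$. For this control the $dr$ integrand vanishes identically, so the same localization and limit give equality, $W(t,u)=\E[\text{terminal cost of }X^*_T]\ge V(t,u)$. Hence $W=V$, and $c^*$ attains the infimum, which is the claimed optimality for the inner problem at fixed $\eta$.

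The main obstacle I expect is the passage $n\to\infty$. The terminal cost is only Lipschitz in $\log x$ and blows up logarithmically as $x\downarrow0$, and $W$ is only continuous, not $C^{1,2}$, up to $t=T$. My first attempt is to lean on the stated uniform-integrability hypothesis together with polynomial growth of $W$. I would supplement this with moment bounds: $\E\sup_r|U_r|^p<\infty$ (bounded controls, Heston moments, bounded jumps) and $\E\sup_r|\log X_r|<\infty$. The second bound follows because $\log X$ has bounded-coefficient drift and diffusion in $\sqrt v$, and jumps of $\log X$ are bounded below by the logarithm of the positive constant in Lemma~\ref{lem:positivity}. I would then use the fact that localization in $x$ near $0$ is harmless, since $\Prob(\inf X\le 1/n)\to0$.
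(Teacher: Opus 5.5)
Your proposal is correct and follows the same verification argument as the paper: It\^o's formula along a localizing sequence, the inequality $\partial_t W+\mathcal{L}^{\pi,q}W\ge 0$ for an arbitrary admissible control giving $W\le V$, and equality along the measurable selector giving $W\ge V$. Your treatment of the compensated jump term, the admissibility of the feedback control and the limit $n\to\infty$ is more explicit than the paper's, but one caution applies: the supplementary moment bounds you sketch are not automatic in this model, because $\delta_P,\mu_P,\nu_P$ and the jump return $R_P$ are unbounded (an OTM put's elasticity explodes deep out of the money and near expiry), so the limit passage should rest, as in the paper, on the assumed uniform integrability.
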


\begin{proof}
Fix any admissible control $(\pi,q)$. Apply It\^o's formula for jump diffusions to $W(t,U_t;\eta)$, stopped at a localizing sequence. Since
\[
\partial_t W + \mathcal{L}^{\pi,q}W \ge 0
\]
by the HJB inequality, taking expectations yields
\[
W(t,u;\eta)
\le
\eta + \frac{1}{1-\alpha}
\E\!\left[(-\log X_T-\eta)_+ \,\middle|\, U_t=u\right].
\]
Taking the infimum over admissible controls gives $W \le V$.

For the feedback selector $(\pi^*,q^*)$, the HJB holds with equality. The
finite-variation term in It\^o's formula is then zero, and uniform integrability
of the stopped martingales gives
\[
W(t,u;\eta)
=
\eta + \frac{1}{1-\alpha}
\E^{\pi^*,q^*}\!\left[(-\log X_T-\eta)_+ \,\middle|\, U_t=u\right].
\]
Since $V$ is the infimum over all admissible controls, the right-hand side is
at least $V(t,u;\eta)$. Combined with $W\le V$ from the previous paragraph,
this yields $W=V$, and the feedback control is optimal.
\end{proof}

\section{Comparative statics and hybrid demand}
\label{sec:comparative}

\subsection{Why puts win on impact}

\begin{proposition}[Instantaneous crash protection]
\label{prop:jumpprotect}
Fix a pre-jump state $u_-=(x,s,v,m)$ and controls $(\pi,q)$. If a negative
log-jump $\zeta<0$ arrives immediately, post-jump wealth is given by
\eqref{eq:jumpwealth}, and its directional derivative with respect to convex
exposure is
\begin{equation}
\frac{\partial X_+}{\partial q} = xR_P(t,s,v;\zeta).
\label{eq:dXdqjump}
\end{equation}
Whenever the overlay reprices upward in the crash state,
$R_P(t,s,v;\zeta)>0$, increasing $q$ raises post-jump wealth and lowers the
corresponding one-step loss. The effect strengthens as the crash deepens if the
overlay is sufficiently convex in $s$.
\end{proposition}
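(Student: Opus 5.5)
The plan is to read everything off the post-jump wealth map \eqref{eq:jumpwealth} and then handle the comparative-static claim about crash depth separately.

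\textbf{Derivative in $q$.} Write $X_+ = x\bigl(1+\pi(e^{\zeta}-1)+qR_P(t,s,v;\zeta)\bigr)$ with $x=X_{t^-}$ and the pre-jump state $(s,v)$ held fixed. The overlay quantities $P$, and hence $R_P$ defined in \eqref{eq:RPdef}, depend only on $(t,s,v,\zeta)$ and not on the controls. The jump map \eqref{eq:jumpmap} confirms that the control enters the wealth coordinate affinely. So $X_+$ is affine in $q$, and differentiating gives \eqref{eq:dXdqjump} exactly. This holds at every interior point of $[0,\overline q]$ and as a one-sided derivative at the endpoints.

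\textbf{Sign and the one-step loss.} If $R_P>0$, the affine map $q\mapsto X_+$ is strictly increasing. Lemma~\ref{lem:positivity} guarantees $X_+>0$ on $\mathcal{K}$, so the one-step log-loss $-\log(X_+/x)$ is well defined. It is strictly decreasing in $q$, since $\log$ is increasing. I would also note when $R_P>0$ actually holds. For the put of Assumption~\ref{ass:overlay}, $P$ is nonincreasing in $s$ (this follows under $\Q$ from the monotone payoff $(K-s)^+$). When $P_s<0$ strictly, $se^{\zeta}<s$ gives $P(t,se^{\zeta},v)>P(t,s,v)$, hence $R_P>0$. Since the statement only asserts the implication ``whenever $R_P>0$'', this monotonicity argument is a remark and not a required step.

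\textbf{Depth of the crash.} This is the main obstacle, because ``sufficiently convex'' has to be made precise. I would differentiate $R_P$ with respect to jump depth:
\[
-\partial_\zeta R_P(t,s,v;\zeta) = -\frac{se^{\zeta}P_s(t,se^{\zeta},v)}{P(t,s,v)} .
\]
This is positive whenever $P_s<0$ at the post-jump spot, so a deeper crash ($\zeta$ more negative) raises $R_P$ and hence raises $\partial X_+/\partial q$.

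The ``strengthening'' should also hold in the elasticity sense: the marginal value $sP_s$ evaluated at the post-jump point should grow as $\zeta$ decreases. I would express this through
\[
\partial_\zeta\bigl(se^{\zeta}P_s(t,se^{\zeta},v)\bigr) = se^{\zeta}\bigl(P_s + se^{\zeta}P_{ss}\bigr)\big|_{se^{\zeta}} .
\]
I would then state the sufficient convexity condition $P_s + sP_{ss}\ge 0$ along the jump path. Under this condition $|sP_s|$ increases as the spot falls, so the marginal gain from $q$ accelerates with crash depth. For the put, $P_{ss}\ge0$ follows from convexity of the payoff, which gives the intended interpretation. I would present this convexity part as a sufficient condition rather than a universal claim.
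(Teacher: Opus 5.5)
Your proposal is correct, and for the derivative and sign claims it is the paper's own proof: differentiate the affine post-jump map \eqref{eq:jumpwealth} in $q$ and read off the sign. Your appeal to Lemma~\ref{lem:positivity}, to make the one-step log-loss well defined, is a detail the paper leaves implicit.

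The difference is in the crash-depth claim. The paper handles it in one sentence, appealing to convexity of puts and saying deeper jumps raise $R_P$ ``once the contract moves toward or into the money.'' Your computation $\partial_\zeta R_P(t,s,v;\zeta)=se^{\zeta}P_s(t,se^{\zeta},v)/P(t,s,v)$ separates two readings of ``strengthens'':
\begin{enumerate}
\item The monotone reading (larger $R_P$ for more negative $\zeta$) needs only a negative delta at the post-jump spot. No curvature and no moneyness caveat are required.
\item Convexity enters only in the stronger ``accelerating'' reading, through $\partial_\zeta^2 R_P=se^{\zeta}\bigl(P_s+se^{\zeta}P_{ss}\bigr)(t,se^{\zeta},v)/P(t,s,v)$. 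Your condition $P_s+sP_{ss}\ge 0$ is a clean formalization of ``sufficiently convex.''
\end{enumerate}
So your route gives a precise hypothesis for each reading, which the paper's heuristic does not separate.

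Two cautions.

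First, $P_{ss}\ge 0$ alone does not give your acceleration condition. For a standard put, $-1\le P_s\le 0$, so $sP_s\to 0$ as $s\downarrow 0$ while $sP_s<0$ for $s>0$. Hence $s\mapsto sP_s$ must decrease somewhere deep in the money, and the acceleration fails there. The qualifier ``sufficiently'' (or a restriction on the range of post-jump spots) is genuinely needed. Do not present plain payoff convexity as delivering it, as your closing sentence about $P_{ss}\ge0$ half-suggests.

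Second, your remarks that $P$ is nonincreasing and convex in $s$ rely on scale-homogeneous $\Q$-dynamics, i.e.\ $P(t,s,v)=\E^{\Q}[(K-sG)^+]$ with $G$ not depending on $s$. Assumption~\ref{ass:overlay} does not state this. Since you keep $P_s<0$ as an explicit hypothesis and use these facts only as remarks, the argument itself is unaffected.
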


\begin{proof}
Differentiate \eqref{eq:jumpwealth} with respect to $q$. The sign claim is
immediate. For convex overlays such as puts, deeper negative jumps raise the
mark-to-market revaluation term $R_P$ once the contract moves toward or into the
money.
\end{proof}

The derivative is trivial, but it sets up the key comparison with the trend
channel.
\begin{corollary}[Trend reversal threshold]
\label{cor:threshold}
Under the odd increasing map $\pi_t=f(M_t)$, there is a whole region of crash
sizes in which the trend channel remains long on impact while the put already
helps. Specifically, if $M_{t^-}>0$ and
\begin{equation}
-M_{t^-}<\zeta<0,
\end{equation}
then the post-jump trend position remains positive, whereas the convex overlay
still has the potential to reprice upward through $R_P(t,s,v;\zeta)$.
\end{corollary}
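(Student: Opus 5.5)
The plan is to read the corollary as a direct combination of Lemma~\ref{lem:signal_mechanics}(i) for the trend leg and Proposition~\ref{prop:jumpprotect} for the overlay leg, evaluated at the same pre-jump state $u_-=(x,s,v,m)$ with $m=M_{t^-}>0$. No new estimates should be needed; the work is in showing that the interval of jump sizes is nonempty and well placed, and that both conclusions hold on it at once.

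\emph{Trend leg.} By \eqref{eq:dM}, a log-jump moves the signal one-for-one, so $M_t=M_{t^-}+\zeta$. On the interval $-M_{t^-}<\zeta<0$ this gives $M_t>0$. Since $f$ is odd, $f(0)=0$, and since $f$ is strictly increasing, $f(M_t)>0$. Hence the post-jump position $\pi_t=f(M_t)$ is still long. This is the contrapositive of the sign-change criterion \eqref{eq:jump_threshold}. The interval is nonempty exactly because $M_{t^-}>0$, which gives the ``whole region'' claim. If the jump support \eqref{eq:zeta_bounded} matters, the relevant region is $(\max\{-M_{t^-},\underline\zeta\},0)$, which is still a nonempty open interval.

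\emph{Overlay leg.} For the same $\zeta$, Proposition~\ref{prop:jumpprotect} gives $\partial X_+/\partial q=xR_P(t,s,v;\zeta)$, where $R_P$ is defined in \eqref{eq:RPdef}. The statement only claims the \emph{potential} to reprice upward, so it suffices to argue that $R_P>0$ whenever $P$ is strictly decreasing in $s$. That holds for the put case $\Phi(S)=(K-S)^+$ under $\Q$: the payoff is nonincreasing in $S_{T_{\mathrm{opt}}}$, and the terminal price is increasing in the initial spot for the fixed variance path. Because $se^{\zeta}<s$ for $\zeta<0$, we get $P(t,se^\zeta,v)>P(t,s,v)$. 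Strict monotonicity follows from the positive probability that the option finishes in the money, since $v>0$ gives nondegenerate diffusion. If I want to avoid a pricing-model comparison argument, I will simply state the conclusion conditionally: whenever $P_s<0$ on $[se^\zeta,s]$, we have $R_P>0$.

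\emph{Expected obstacle.} The only delicate point is this strict monotonicity of $P$ in $s$ under the Heston-jump measure $\Q$. I would first try the pathwise comparison: the multiplicative structure of $S$ gives $S_{T_{\mathrm{opt}}}$ linear in $S_t$ given the driving noise. Failing that, I would fall back on the hedged wording already present in the corollary.
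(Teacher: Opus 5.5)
Your proposal is correct and matches the paper. The paper gives no separate proof: it reads the corollary off Lemma~\ref{lem:signal_mechanics}(i), where $M_t=M_{t^-}+\zeta>0$ and $f(0)=0$ force $f(M_t)>0$, together with Proposition~\ref{prop:jumpprotect}, and it leaves the overlay claim at the hedged level of a ``potential'' upward repricing. Your extra pathwise argument for $R_P>0$ goes beyond what is needed, and it would require the $\Q$-dynamics to have spot-level-independent coefficients, which Assumption~\ref{ass:overlay} does not specify, so your conditional fallback is the right level of claim.
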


The economics are immediate. Puts respond state by state. Trend responds through
a signal, and the signal must first move through zero. That is why trend can be
late in violent reversals even when it is effective in drawn-out crises.

\begin{remark}[Impact versus regime]
\label{rem:impact_vs_regime}
Proposition~\ref{prop:jumpprotect} compares the two channels \emph{at one isolated jump
arrival}, conditional on the pre-jump state. That is a genuinely different
object from a one-year \emph{regime} characterized by repeated jumps, such
as the flash-crash calibration used in the numerical experiments of
\Cref{sec:mc}. A regime with intensity $\lambda$ experiences an expected
$\lambda T$ jumps over the horizon; between jumps the trend signal evolves
continuously, and under Lemma~\ref{lem:signal_mechanics} it drifts toward
negative values after any sustained period of negative returns. Over such
a regime, trend can therefore be already defensive by the time later jumps
arrive, even though on the very first jump Proposition~\ref{prop:jumpprotect} implies
that convex insurance is the instantaneous dominant channel. Thus the
``puts win on impact'' statement and the numerical finding that trend does
a large share of the CVaR reduction in a year-long flash-crash regime are
not in tension: they describe two different conditional objects. The
interior hybrid optima documented in \Cref{sec:mc} arise because both
mechanisms contribute: trend handles the between-jump drift, convex
insurance handles the impact revaluation, and the CVaR-minimizing
combination balances the two.
\end{remark}

\subsection{Why trend wins in persistent drawdowns}

Lemma~\ref{lem:signal_mechanics} gives the complementary result. Under a gradual
deterministic selloff, the signal follows \eqref{eq:Mdrawdown} and crosses zero
after the finite delay in \eqref{eq:reversal_time}. The trend overlay therefore
needs time, but once the drawdown horizon exceeds $t^*-\tau$, the same signal
mechanism that hurts it in jump states becomes an advantage: exposure turns
progressively defensive without requiring fresh premium outlays. That is the
precise sense in which trend is a hedge against \emph{persistent} rather than
\emph{instantaneous} tail events. Remark~\ref{rem:impact_vs_regime} extends
this observation to crash-prone regimes: a regime of repeated jumps still
admits trend protection provided the jumps are not so closely packed that
the signal has no time to react between them.

\subsection{Sufficient conditions for an interior hybrid optimum}

\paragraph{From free control to trend-following structure.}
Proposition~\ref{prop:HJB} characterizes the CVaR-optimal policy when the
directional control $\pi_t$ is free within the compact set $\mathcal{K}$.
Under that formulation the model is more general than a trend-following
model: any progressively measurable, compactly bounded directional policy
is admissible, and the trend signal $M_t$ enters only as a state variable
that the optimizer may or may not use. To represent a trend-following
overlay specifically, the directional control must be restricted to depend
on the signal in a monotonic way --- that is, to take the form $\pi_t =
g_\theta(t,X_t,S_t,v_t,M_t)$ with $g_\theta$ increasing in $M_t$. The
simplest nontrivial instance is $\pi_t=f(M_t)$ with $f$ odd and strictly
increasing, as introduced in \eqref{eq:pit}. The comparative statics of
\Cref{sec:comparative} and the simulations of \Cref{sec:mc} are stated for
this restricted class, not for the unrestricted HJB problem. The
framework thus nests two objects: the unrestricted CVaR-optimal control,
which establishes a reference ceiling, and the trend-following subclass,
which is the structurally interpretable policy actually implemented in
practice. The reduced two-parameter family below is the further
simplification used throughout the numerical work.

The full HJB problem is high dimensional. To isolate the economic content of the
hybrid demand result, it is useful to consider a reduced two-parameter policy
class:
\begin{equation}
\pi_t = a f(M_t), \qquad q_t=b,
\qquad (a,b)\in[0,\overline a]\times[0,\overline b].
\label{eq:reducedclass}
\end{equation}
Let $L^{a,b}$ be the resulting terminal loss and define
\begin{equation}
J(a,b):=\CVaR_\alpha(L^{a,b}).
\end{equation}
This reduced problem is exactly the object visualized in our experimental grid
search over the hybrid put weight.

\begin{proposition}[Sufficient conditions for an interior hybrid optimum in the reduced class]
\label{prop:interior}
Suppose $J$ is continuously differentiable and strictly convex on the rectangle
$[0,\overline a]\times[0,\overline b]$. Assume moreover that for every fixed
$b\in[0,\overline b]$,
\begin{equation}
\partial_a J(0,b)<0<\partial_a J(\overline a,b),
\label{eq:agrad}
\end{equation}
and for every fixed $a\in[0,\overline a]$,
\begin{equation}
\partial_b J(a,0)<0<\partial_b J(a,\overline b).
\label{eq:bgrad}
\end{equation}
Then $J$ admits a unique minimizer $(a^*,b^*)$ satisfying
$0<a^*<\overline a$ and $0<b^*<\overline b$.
\end{proposition}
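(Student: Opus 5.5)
The plan is to use existence from compactness, uniqueness from strict convexity, and interiority from first-order conditions on each edge of the rectangle. Write $R:=[0,\overline a]\times[0,\overline b]$.

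First, $J$ is continuous on the compact set $R$, so it attains its minimum at some $(a^*,b^*)$. If there were two distinct minimizers $z_1\neq z_2$, strict convexity would give
\[
J\bigl(\tfrac12(z_1+z_2)\bigr)<\tfrac12 J(z_1)+\tfrac12 J(z_2)=\min_R J .
\]
The midpoint lies in $R$ because $R$ is convex, so this is a contradiction. Hence the minimizer is unique. This step is routine.

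Second, I exclude each of the four boundary edges by a one-dimensional first-order argument. Suppose $a^*=0$. The map $a\mapsto J(a,b^*)$ is $C^1$ on $[0,\overline a]$, and by \eqref{eq:agrad} its right derivative at $0$ is $\partial_a J(0,b^*)<0$. So for small $h>0$ we get $J(h,b^*)<J(0,b^*)$, with $(h,b^*)\in R$, contradicting minimality.

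The other edges are handled the same way:
\begin{itemize}
\item If $a^*=\overline a$, the left derivative is $\partial_a J(\overline a,b^*)>0$, so $J(\overline a-h,b^*)<J(\overline a,b^*)$ for small $h>0$.
\item If $b^*=0$, use $\partial_b J(a^*,0)<0$ from \eqref{eq:bgrad} and move to $(a^*,h)$.
\item If $b^*=\overline b$, use $\partial_b J(a^*,\overline b)>0$ and move to $(a^*,\overline b-h)$.
\end{itemize}
The gradient conditions are assumed for \emph{every} value of the other coordinate. So they hold at whatever $b^*$ or $a^*$ happens to be, including corners: a corner lies on two edges, and either edge argument already rules it out. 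This gives $0<a^*<\overline a$ and $0<b^*<\overline b$.

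Third, as a consistency check, interiority together with differentiability gives $\nabla J(a^*,b^*)=0$. By convexity this stationary point is the global minimizer, which agrees with the uniqueness already shown. This also yields the characterization used later in the numerical grid search.

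The only point needing care is how the derivatives are interpreted at the boundary of the closed rectangle. I will read ``continuously differentiable on the rectangle'' as one-sided partial derivatives on the edges, continuous up to the boundary; the first-order argument uses nothing more than that. Strict convexity enters only for uniqueness, not for interiority.
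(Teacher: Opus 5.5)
Your proof is correct and takes the same route as the paper: existence by the extreme-value theorem on the compact rectangle, uniqueness from strict convexity, and exclusion of every boundary point, corners included, via the edge-derivative sign conditions holding for every value of the other coordinate. Your version writes out the midpoint and one-sided difference-quotient steps that the paper's short proof leaves implicit.
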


\begin{proof}
Because $J$ is continuous on the compact rectangle
$[0,\overline a]\times[0,\overline b]$, a minimizer exists by the extreme-value
theorem. Strict convexity implies that $J$ has at most one minimizer. The boundary
sign conditions \eqref{eq:agrad}--\eqref{eq:bgrad} rule out every boundary
point: the objective is decreasing when one enters the rectangle from the left
or bottom edge and increasing before reaching the right or top edge. Hence the
unique minimizer cannot lie on the boundary and must be interior.
\end{proof}

\begin{remark}[Status of the result]
\label{rem:interior_status}
Proposition~\ref{prop:interior} is a formalization of the intuition that a well-behaved
CVaR objective with non-trivial boundary slopes admits an interior
hybrid solution; it is not a general theorem that hybrid configurations
dominate pure put or pure trend allocations. Both the strict-convexity
hypothesis and the two boundary-slope conditions are substantive and do not
hold automatically: the first requires the CVaR objective to be strictly
convex in $(a,b)$, which generally fails for CVaR on nonlinear-in-weight
losses; the second asks that both sleeves strictly improve the tail at
zero allocation and that neither dominates at maximal allocation. The
proposition should therefore be read as giving \emph{sufficient} conditions
under which the numerically observed interior optima of \Cref{fig:optimalw}
are consistent with a well-posed optimization problem, rather than as a
claim that interior hybrids are generically optimal.
\end{remark}

\begin{remark}[Economic interpretation]
The two sign conditions have clear content. The first says that allowing a
trend-sensitive directional overlay strictly improves tail risk relative to zero
trend exposure. The second says that adding some convex insurance strictly
improves tail risk relative to no insurance. When both statements are true and
the reduced objective is well behaved, the optimum is genuinely hybrid rather
than a corner solution.
\end{remark}

\begin{remark}[On the strict-convexity assumption]
\label{rem:convexity}
CVaR is convex in portfolio weights when the loss is linear in the weights, but
$J(a,b)$ here involves a nonlinear dependence through the trend signal and the
option repricing. Strict convexity therefore does not hold automatically; it is
a regularity condition on the particular parameter combination and jump-size
distribution. In the simulation section, the grid-search CVaR curves in
\cref{fig:optimalw} are visually convex and admit clear interior minima, which is
consistent with the assumption holding in the stylized calibration. More
generally, strict convexity can fail if the two hedge channels are nearly
redundant, in which case the optimum becomes a flat ridge rather than a unique
point.
\end{remark}

\begin{proposition}[Local characterization of a hybrid optimum in the reduced class]
\label{prop:local_hybrid}
Suppose
\[
J(a,b)=\CVaR_\alpha(L_{a,b})
\]
is twice continuously differentiable on an open neighborhood of $(a^*,b^*)\in (0,\bar a)\times(0,\bar b)$. If
\[
\nabla J(a^*,b^*) = 0
\]
and the Hessian
\[
H_J(a^*,b^*)
=
\begin{pmatrix}
\partial_{aa}J(a^*,b^*) & \partial_{ab}J(a^*,b^*) \\
\partial_{ba}J(a^*,b^*) & \partial_{bb}J(a^*,b^*)
\end{pmatrix}
\]
is positive definite, equivalently
\[
\partial_{aa}J(a^*,b^*) > 0,
\qquad
\det H_J(a^*,b^*) > 0,
\]
then $(a^*,b^*)$ is a strict local minimizer of $J$. If, in addition, $J$ is convex on
$[0,\bar a]\times[0,\bar b]$, then this local minimizer is the unique global minimizer on that rectangle.
\end{proposition}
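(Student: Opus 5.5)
The plan is to apply the standard second-order sufficient condition for an unconstrained local minimum, and then use convexity to pass from local to global.

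\emph{Step 1: local minimality.} Write $z^*=(a^*,b^*)$ and $h=(h_a,h_b)$. Since $z^*$ lies in the open rectangle and $J$ is $C^2$ on an open neighborhood $N$ of $z^*$, we may take $N$ to be a ball inside $(0,\bar a)\times(0,\bar b)$. Taylor's theorem with Lagrange remainder gives, for $z^*+h\in N$,
\[
J(z^*+h)=J(z^*)+\nabla J(z^*)\cdot h+\tfrac12 h^\top H_J(z^*+\tau h)h
\]
for some $\tau\in(0,1)$. The gradient term vanishes by assumption.

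Positive definiteness of $H_J(z^*)$ is equivalent, by Sylvester's criterion for symmetric $2\times 2$ matrices, to $\partial_{aa}J>0$ and $\det H_J>0$. Symmetry of the Hessian holds by Schwarz's theorem since $J\in C^2$. It gives $h^\top H_J(z^*)h\ge m|h|^2$ with $m>0$ the smallest eigenvalue.

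Continuity of second derivatives lets us shrink $N$ so that $\|H_J(z)-H_J(z^*)\|\le m/2$ on $N$. Hence $h^\top H_J(z^*+\tau h)h\ge \tfrac m2|h|^2$, and therefore
\[
J(z^*+h)\ge J(z^*)+\tfrac m4|h|^2>J(z^*)
\]
for $h\neq0$. This shows $z^*$ is a strict local minimizer.

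\emph{Step 2: global minimality under convexity.} Assume $J$ is convex on the rectangle $R=[0,\bar a]\times[0,\bar b]$. Take any $z\in R$. For small $t\in(0,1]$, convexity gives $J(z^*+t(z-z^*))\le (1-t)J(z^*)+tJ(z)$, and the point $z^*+t(z-z^*)$ lies in $N$. The local bound from Step 1 gives the lower bound
\[
J(z^*)+\tfrac m4 t^2|z-z^*|^2\le (1-t)J(z^*)+tJ(z).
\]
Rearranging and dividing by $t$ yields $J(z)\ge J(z^*)+\tfrac m4 t|z-z^*|^2$, which is strict whenever $z\ne z^*$.

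So $z^*$ is the unique global minimizer on $R$, even though $J$ is assumed only convex and not strictly convex; the local strictness supplies the uniqueness.

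\emph{Main obstacle.} The argument needs no stochastic input, so the only delicate point is this uniqueness step: plain convexity alone allows flat sets of minimizers. The chord argument above handles it by combining convexity with the quadratic growth obtained locally. I would also remark that the differentiability of $J$ is an assumption here, not derived from the CVaR structure.
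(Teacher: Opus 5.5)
Your proof is correct and follows the same route as the paper: the standard second-order sufficient condition gives the strict local minimum, and convexity on the rectangle upgrades it to the unique global minimizer. Your chord argument, which combines convexity with the local quadratic growth bound $J(z^*+h)\ge J(z^*)+\tfrac m4|h|^2$, spells out the uniqueness step that the paper's proof only asserts in one line.
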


\begin{proof}
The first statement is the standard second-order sufficient condition for a strict local minimum. If $J$ is convex on the full rectangle, any local minimizer is automatically global. Positive definiteness of the Hessian yields local uniqueness, and convexity upgrades this to uniqueness on the admissible rectangle.
\end{proof}

\begin{remark}[Why this is sharper than global strict convexity]
The point of the proposition is to separate two logically distinct issues. Local stability of the hybrid solution is governed by the Hessian at the candidate optimum, whereas global uniqueness on the admissible set requires an additional convexity argument. In applications, the first question is often the economically relevant one: whether the numerically identified hybrid point is genuinely interior and locally well pinned down.
\end{remark}

\section{CVaR policy gradients}
\label{sec:gradient}

Let $L(\bm\theta):=-\log X_T(\bm\theta)$ be the terminal log-loss generated
by a parameterized policy $\bm\theta\mapsto(\pi_{\bm\theta},q_{\bm\theta})$,
under the loss convention \eqref{eq:loss_convention}.

\begin{theorem}[CVaR policy-gradient identity]
\label{thm:pg}
Assume $L(\bm\theta)$ is almost surely continuously differentiable in
$\bm\theta$, that its gradient is dominated by an integrable envelope in a
neighborhood of the parameter of interest, and that the distribution of
$L(\bm\theta)$ is continuous at the $\alpha$-quantile. Then
\begin{equation}
\nabla_{\bm\theta}\CVaR_\alpha\bigl(L(\bm\theta)\bigr)
=
\frac{1}{1-\alpha}
\E\Bigl[\nabla_{\bm\theta}L(\bm\theta)\,\mathbf{1}_{\{L(\bm\theta)\ge
\VaR_\alpha(L(\bm\theta))\}}\Bigr].
\label{eq:pg}
\end{equation}
\end{theorem}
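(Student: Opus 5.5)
The plan is to differentiate the Rockafellar--Uryasev representation \eqref{eq:RU} with an envelope (Danskin-type) argument. Write
\[
G(\bm\theta,\eta):=\eta+\frac{1}{1-\alpha}\E\bigl[(L(\bm\theta)-\eta)^+\bigr],
\qquad
\CVaR_\alpha(L(\bm\theta))=\min_{\eta\in\R}G(\bm\theta,\eta).
\]
For each $\bm\theta$ the minimizing set in $\eta$ is the closed interval of $\alpha$-quantiles of $L(\bm\theta)$, and $\VaR_\alpha(L(\bm\theta))$ is its left endpoint. The aim is to show that $\CVaR_\alpha$ is differentiable in $\bm\theta$ with gradient $\nabla_{\bm\theta}G(\bm\theta,\eta)$, evaluated at $\eta=\VaR_\alpha(L(\bm\theta))$.

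The first step computes $\nabla_{\bm\theta}G(\bm\theta,\eta)$ for a fixed $\eta$. The map $\bm\theta\mapsto(L(\bm\theta)-\eta)^+$ is Lipschitz in $L$ with constant one. It is differentiable in $\bm\theta$ almost surely away from the event $\{L(\bm\theta)=\eta\}$, and there its gradient is $\nabla_{\bm\theta}L\,\mathbf{1}_{\{L>\eta\}}$. At $\eta=\VaR_\alpha$, continuity of the law of $L(\bm\theta)$ at the quantile gives $\Prob(L=\eta)=0$, so this exceptional event is null. The difference quotients are bounded by the integrable envelope of $\nabla_{\bm\theta}L$, via the mean value theorem along segments in the neighborhood. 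Dominated convergence then lets us pass the derivative inside the expectation, giving
\[
\nabla_{\bm\theta}G(\bm\theta,\eta)=\frac{1}{1-\alpha}\E\bigl[\nabla_{\bm\theta}L\,\mathbf{1}_{\{L\ge\eta\}}\bigr],
\]
where $>$ and $\ge$ coincide up to a null set. The same bound shows that $G$ is Lipschitz in $\bm\theta$, uniformly in $\eta$, on the neighborhood.

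The second step is the envelope argument, which I expect to be the main obstacle. The upper bound is easy: since $\CVaR(\bm\theta')\le G(\bm\theta',\eta^*)$, where $\eta^*$ is the quantile at $\bm\theta$, we get $\limsup$ of the directional difference quotient $\le\nabla_{\bm\theta}G(\bm\theta,\eta^*)\cdot h$.

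For the lower bound, pick minimizers $\eta'$ at $\bm\theta'=\bm\theta+th$. Then $\CVaR(\bm\theta')-\CVaR(\bm\theta)\ge G(\bm\theta',\eta')-G(\bm\theta,\eta')$. This requires three things.
\begin{enumerate}
\item The minimizers $\eta'$ must stay bounded. This holds because $L(\bm\theta')\to L(\bm\theta)$ in $L^1$ under the envelope, so the quantiles stay in a compact set.
\item $\nabla_{\bm\theta}G(\cdot,\eta)$ must be jointly continuous at $(\bm\theta,\eta)$ for $\eta$ in the argmin set. If the argmin is a nondegenerate interval, then $\Prob(\eta_1<L<\eta_2)=0$ on that interval, so the indicator $\mathbf{1}_{\{L\ge\eta\}}$ is a.s.\ constant across it and the gradient formula does not depend on which minimizer is used.
\item One applies the mean value theorem to $G(\cdot,\eta')$ along the segment and uses continuity of $\nabla_{\bm\theta}G$ in $(\bm\theta,\eta)$. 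That continuity follows from dominated convergence, since $\mathbf{1}_{\{L(\bm\theta')\ge\eta'\}}\to\mathbf{1}_{\{L(\bm\theta)\ge\eta\}}$ a.s.\ off the null set $\{L(\bm\theta)=\eta\}$.
\end{enumerate}
Matching the upper and lower bounds yields the directional derivative $\nabla_{\bm\theta}G(\bm\theta,\VaR_\alpha)\cdot h$ for every $h$, and this is linear in $h$. That gives \eqref{eq:pg}.

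As a sanity check I will note the continuous case, where $\Prob(L\ge\VaR_\alpha)=1-\alpha$. The formula then reads $\E[\nabla_{\bm\theta}L\mid L\ge\VaR_\alpha]$, which is the familiar conditional-expectation form. It is also consistent with differentiating \eqref{eq:CVaR} directly, because the $\eta$-derivative of $G$ vanishes at the quantile and so the movement of $\VaR_\alpha$ does not contribute at first order.

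If the uniform-in-$\eta$ continuity in the lower bound proves delicate, the fallback is to invoke Danskin's theorem directly. Its hypotheses are that $G$ is convex in $\eta$, continuous, and continuously differentiable in $\bm\theta$ on (neighborhood)$\times$(compact $\eta$-set), and these are exactly the facts established in the first and second steps.
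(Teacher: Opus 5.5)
Your route, the Rockafellar--Uryasev representation plus an envelope sandwich, is the one the paper points to when it cites Hong--Liu and the envelope theorem. Your first step and your upper bound are correct. The gap is in item 3 of the lower bound. There, the a.s.\ convergence of the indicators needs $\Prob(L(\bm\theta)=\bar\eta)=0$ at the limit $\bar\eta$ of the minimizers $\eta'$. The hypothesis gives this only at $\bar\eta=\VaR_\alpha$, which is the left end $\eta_1$ of the quantile interval $[\eta_1,\eta_2]$. The minimizers can instead converge to $\eta_2$, and $L(\bm\theta)$ may carry an atom there. Item 2 does not cover this: the indicator that is constant on $[\eta_1,\eta_2]$ is the $\ge$ one, whereas on an atom at $\eta_2$ the map $\bm\theta\mapsto(L(\bm\theta)-\eta_2)^+$ has a kink.

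Here is a concrete case. Take $\alpha=1/2$, a scalar parameter $t$, and define $L(t)$ as follows:
$L(t)=U+tk$ with probability $1/2$, where $U\sim\mathrm{Unif}[-1,0]$ and $\Prob(k>x)=x^{-2}$ for $x\ge 1$;
$L(t)=1+t$ with probability $1/4$;
$L(t)=2+V$ with probability $1/4$, where $V\sim\mathrm{Unif}[0,1]$.
All hypotheses hold at $t=0$: $\VaR_{1/2}=0$ carries no mass, and the argmin of $G(0,\cdot)$ is $[0,1]$. For small $t>0$, some lower-cluster mass crosses the gap because $k$ is unbounded. So the unique minimizer is $\eta'=1+t$, and $G(t,1+t)-G(0,1+t)=O(t^2)$. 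Your lower bound on the derivative is therefore $0$, while $\CVaR_{1/2}(L(t))=1.75+t/2+O(t^2)$, exactly as the formula predicts.

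The Danskin fallback fails for the same reason. $G(\cdot,1)$ has a kink at $t=0$, so $G$ is not $C^1$ in $\bm\theta$ on a neighbourhood times a compact $\eta$-set. Your first step only established differentiability at $(\bm\theta,\VaR_\alpha)$.

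The fix is to stop tracking minimizers. Let $A:=\{L(\bm\theta)>\VaR_\alpha(L(\bm\theta))\}$; continuity at the quantile gives $\Prob(A)=1-\alpha$. Since $x^+\ge x\mathbf{1}_A$, for every $\eta$ and every $\bm\theta'$,
\[
G(\bm\theta',\eta)\ \ge\ \eta+\frac{1}{1-\alpha}\E\bigl[(L(\bm\theta')-\eta)\mathbf{1}_A\bigr]=\frac{1}{1-\alpha}\E\bigl[L(\bm\theta')\mathbf{1}_A\bigr].
\]
Hence $\CVaR_\alpha(L(\bm\theta'))\ge\frac{1}{1-\alpha}\E[L(\bm\theta')\mathbf{1}_A]$, with equality at $\bm\theta'=\bm\theta$. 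This is the dual representation of CVaR with density $\mathbf{1}_A/(1-\alpha)$.

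Dominated convergence then gives $\liminf\ge\frac{1}{1-\alpha}\E[\nabla_{\bm\theta}L\cdot h\,\mathbf{1}_A]$, which matches your upper bound. This needs no boundedness or convergence of minimizers, no joint continuity of $\nabla_{\bm\theta}G$, and no Danskin. Your uniform Lipschitz bound then upgrades the linear directional derivative to a gradient.

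If you read the hypothesis as forcing a unique quantile, your sandwich does close, with one repair. The mean value theorem on $G(\cdot,\eta')$ can fail, because $L(\bm\theta')$ may have atoms for $\bm\theta'\neq\bm\theta$. Replace it with the pathwise identity $(L(\bm\theta+th)-\eta')^+-(L(\bm\theta)-\eta')^+=\int_0^t\nabla_{\bm\theta}L(\bm\theta+sh)\cdot h\,\mathbf{1}_{\{L(\bm\theta+sh)>\eta'\}}\,ds$ together with Fubini.
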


\begin{proof}
Start from the Rockafellar--Uryasev representation \eqref{eq:RU}. Under the
dominating envelope, differentiation may be interchanged with expectation.
Continuity of the loss distribution at the $\alpha$-quantile ensures that the
positive-part term is a.s. differentiable at $\eta^*=\VaR_\alpha(L(\bm\theta))$
and that the boundary term in the interchange vanishes; see
\citet[Thm.~3.1]{HongLiu2009} for a direct proof of \eqref{eq:pg} under these
assumptions. An equivalent derivation applies the envelope theorem to the
outer minimization in \eqref{eq:RU} after absorbing the optimal $\eta^*$.
\end{proof}

\begin{corollary}[Tail-gradient interpretation of hybrid demand]
\label{cor:tailgrad}
In the reduced class
\[
\pi_t = a f(M_t), \qquad q_t = b,
\]
let
\[
J(a,b):=\CVaR_\alpha(L_{a,b}).
\]
Under the regularity conditions of Theorem~\ref{thm:pg},
\[
\partial_a J(a,b)
=
\frac{1}{1-\alpha}
\E\!\left[
\partial_a L_{a,b}\,
\mathbf{1}\!\left\{L_{a,b}\ge \VaR_\alpha(L_{a,b})\right\}
\right],
\]
and
\[
\partial_b J(a,b)
=
\frac{1}{1-\alpha}
\E\!\left[
\partial_b L_{a,b}\,
\mathbf{1}\!\left\{L_{a,b}\ge \VaR_\alpha(L_{a,b})\right\}
\right].
\]
Hence the first-order condition for an interior hybrid optimum is equivalent to
\[
\E\!\left[
\partial_a L_{a^*,b^*}\,
\mathbf{1}\!\left\{L_{a^*,b^*}\ge \VaR_\alpha(L_{a^*,b^*})\right\}
\right]
=0,
\]
\[
\E\!\left[
\partial_b L_{a^*,b^*}\,
\mathbf{1}\!\left\{L_{a^*,b^*}\ge \VaR_\alpha(L_{a^*,b^*})\right\}
\right]
=0.
\]
In words: at the interior hybrid optimum, neither an incremental increase in trend intensity nor an incremental increase in convex insurance improves the $\alpha$-tail on average.
\end{corollary}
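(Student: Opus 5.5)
The corollary follows from Theorem~\ref{thm:pg} by specializing the parameter vector to $\bm\theta=(a,b)$ in the reduced class \eqref{eq:reducedclass}, and then reading the first-order condition from Proposition~\ref{prop:local_hybrid}.

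\textbf{Step 1: the partial derivatives.} Take the policy map $\bm\theta=(a,b)\mapsto(\pi_t,q_t)=(af(M_t),b)$. This map is smooth in $(a,b)$ for each path. The standing hypotheses of Theorem~\ref{thm:pg} are assumed:
\begin{enumerate}[label=(\roman*),leftmargin=1.6em]
    \item almost-sure $C^1$ dependence of $L_{a,b}$ on $(a,b)$;
    \item an integrable envelope for $\nabla_{(a,b)}L_{a,b}$;
    \item continuity of the law of $L_{a,b}$ at the $\alpha$-quantile.
\end{enumerate}
Under these hypotheses, \eqref{eq:pg} holds with $\nabla_{\bm\theta}=(\partial_a,\partial_b)$. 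Writing it componentwise gives the two displayed formulas for $\partial_aJ$ and $\partial_bJ$.

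For (i), I would note the following. Under the reduced class, $\log X_T$ is obtained from \eqref{eq:Xexpanded} as a sum of stochastic integrals plus $\sum\log\bigl(1+af(M_{\tau_j^-})(e^{\zeta_j}-1)+bR_P\bigr)$. The jump arguments are bounded away from zero by Lemma~\ref{lem:positivity}. Moreover, $M_t$ does not depend on $(a,b)$, since it is driven by $y_t$ alone. So the integrands are affine in $(a,b)$ inside logarithms and squares, and pathwise differentiability is plausible. Nevertheless, I would treat it as part of the hypothesis rather than re-derive it.

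\textbf{Step 2: the first-order condition.} At an interior point $(a^*,b^*)\in(0,\overline a)\times(0,\overline b)$ where $J$ is differentiable, a minimizer must satisfy $\nabla J(a^*,b^*)=0$. Conversely, the gradient-zero condition is the stationarity hypothesis used in Proposition~\ref{prop:local_hybrid}. Substituting Step~1 and multiplying by $1-\alpha>0$ turns $\partial_aJ=\partial_bJ=0$ into the two displayed tail-expectation equations. The equivalence is immediate because the prefactor $1/(1-\alpha)$ is nonzero.

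The verbal interpretation then follows. $\partial_aL$ and $\partial_bL$ are the marginal loss responses to trend intensity and insurance, and the two equations say these responses average to zero on the tail event $\{L\ge\VaR_\alpha\}$.

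\textbf{Main obstacle.} The delicate point is that ``first-order condition equivalent to'' silently requires the regularity of Theorem~\ref{thm:pg} to hold at $(a^*,b^*)$ itself. In particular, continuity of the law of $L_{a^*,b^*}$ at its quantile could fail if, for instance, $b^*$ produces atoms through the bounded jump support.

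I would handle this by stating explicitly that the hypotheses are imposed in a neighbourhood of $(a^*,b^*)$. I would also remark that stationarity is only necessary for optimality; sufficiency needs the Hessian or convexity conditions of Proposition~\ref{prop:local_hybrid} or Proposition~\ref{prop:interior}. So the equivalence asserted is between the gradient-zero condition and the tail-expectation equations, not between optimality and those equations.
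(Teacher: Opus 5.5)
Your proposal is correct and matches the paper's proof, which consists solely of applying Theorem~\ref{thm:pg} with $\bm\theta=(a,b)$ and reading off the two components. Your extra remarks are refinements within that same argument: the regularity hypotheses must hold at $(a^*,b^*)$ itself, and the equivalence is between $\nabla J=0$ and the tail-expectation equations, not between optimality and those equations.
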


\begin{proof}
Apply Theorem~\ref{thm:pg} with parameter $\bm\theta=(a,b)$.
\end{proof}

The sample analogue should be written carefully. If $\hat q_\alpha$ is the
empirical $\alpha$-quantile and
$N_\alpha:=\sum_{i=1}^N \mathbf{1}_{\{L^{(i)}\ge \hat q_\alpha\}}$, then a
natural Monte Carlo estimator is
\begin{equation}
\widehat{\nabla_{\bm\theta}\CVaR_\alpha}
=
\frac{1}{N_\alpha}
\sum_{i=1}^N
\nabla_{\bm\theta}L^{(i)}\,\mathbf{1}_{\{L^{(i)}\ge \hat q_\alpha\}}.
\label{eq:pg_est}
\end{equation}
Because this is the tail average of the loss gradient, it is the finite-sample object
that corresponds directly to the conditional-expectation definition of CVaR.
Under standard regularity conditions, it is a consistent estimator of the policy
gradient; see again \citet{HongLiu2009}. The alternative normalization
$\tfrac{1}{(1-\alpha)N}\sum_{i=1}^N\nabla_{\bm\theta}L^{(i)}
\mathbf{1}_{\{L^{(i)}\ge\hat q_\alpha\}}$ that drops out of
\eqref{eq:RU} directly is asymptotically equivalent to \eqref{eq:pg_est}
because $N_\alpha/N\to 1-\alpha$ a.s.\ under distributional continuity at
the quantile; the difference is $O_p(N^{-1/2})$ in the tail-count ratio and
does not affect consistency. In practice we also use common random numbers
across the $w$-grid in \Cref{fig:gradient} so that the tail-gradient curve
is driven by policy changes rather than by simulation noise between grid
points. In constrained problems such as $w\in[0,1]$, projected gradient
descent is immediate.

\section{Stylized Monte Carlo evidence}
\label{sec:mc}

This section reports numerical experiments that generate fast, stylized evidence
for the theory. The goal is not full empirical calibration. It is to check
whether a computational implementation reproduces the state ranking implied by
the analytical results above.

\subsection{Design}

The experiments simulate one-year paths at daily frequency under the model in
\Cref{sec:setup} using $2{,}000$ Monte Carlo paths in the baseline and three
stress scenarios. For speed, the put overlay is approximated by a monthly rolled
10\% OTM Black--Scholes put whose volatility input is a rolling realized-volatility
proxy. Trend is implemented as an equal-weight combination of 1-, 3-, and
12-month EWMA signals on simulated log-returns. The hybrid overlay is a convex
combination of the put and trend sleeves. Because the reduced policy class
\eqref{eq:reducedclass} uses a constant hybrid weight $w$ along each path, the
pre-commitment issue of Remark~\ref{rem:precommitment} is less severe in this
section than in a fully reoptimized dynamic policy problem. It does not disappear
conceptually: the simulated objective remains terminal CVaR, and the weight is
chosen ex ante. The numerical exercise should therefore be interpreted as a
pre-commitment allocation experiment over a transparent two-sleeve strategy
class.

The grid search includes the pure sleeves as endpoints, so an optimized hybrid
can mechanically improve on the endpoints in-sample when the CVaR curve is
convex. For that reason the evidence below reports both the fixed equal-weight
Hybrid in \Cref{tab:baseline} and the optimized grid weights in
\Cref{tab:scenarios}. All grid points use common random numbers, which reduces
Monte Carlo noise in cross-weight comparisons. In a production empirical study,
the optimized weight should additionally be selected on a training period or
simulation batch and evaluated on an independent holdout batch.

\begin{table}[tbp]
\centering
\caption{Stylized simulation inputs used in the experiments.}
\label{tab:params}
\begin{threeparttable}
\begin{adjustbox}{max width=0.97\textwidth}
\begin{tabular}{llll}
\toprule
Block & Quantity & Symbol & Value \\
\midrule
Market & Expected return & $\mu$ & $0.07$ \\
Market & Initial / long-run variance & $v_0,\theta$ & $0.04,\;0.04$ \\
Market & Mean reversion & $\kappa$ & $3.0$ \\
Market & Vol-of-vol & $\xi$ & $0.40$ \\
Market & Leverage correlation & $\rho$ & $-0.70$ \\
Jumps & Intensity & $\lambda$ & $3.0$ per year \\
Jumps & Mean log-jump & $\mu_\zeta$ & $-0.05$ \\
Jumps & Log-jump volatility & $\sigma_\zeta$ & $0.07$ \\
Trend & Signal half-lives & --- & 1m / 3m / 12m \\
Put overlay & Moneyness / roll frequency & $K/S_0$ & $0.90$, monthly \\
Simulation & Horizon / step / paths & --- & 1 year / 252 days / $2{,}000$ \\
\bottomrule
\end{tabular}
\end{adjustbox}
\begin{tablenotes}[flushleft]
\footnotesize
\item These are illustrative inputs used in the experiments. They are chosen to
produce realistic-looking non-Gaussian equity dynamics, not to estimate a live
option surface or historical carry premia exactly. The companion replication
script exposes the number of paths as a parameter; higher path counts should be
used for any production statement about $\CVaR_{0.99}$.
\end{tablenotes}
\end{threeparttable}
\end{table}

The baseline parameters satisfy the Feller condition
$2\kappa\theta = 0.24 > 0.16 = \xi^2$ in continuous time, so the true
variance process remains strictly positive almost surely. The Feller
condition does not, however, carry over automatically to the discrete
scheme: a plain Euler--Maruyama step on the CIR diffusion can still
produce negative variance draws. We therefore discretize the variance
equation using the full-truncation Euler scheme of
\citet{LordKoekkoekVanDijk2010}, which replaces $\sqrt{v_t}$ in both the
diffusion and drift terms with $\sqrt{v_t^+}$ at each step; this
preserves non-negativity of the simulated variance path and has
well-documented weak convergence properties for CIR-type dynamics.
Alternative schemes with the same positivity property --- notably
Andersen's quadratic-exponential scheme --- give very similar results in
the stylized calibration used here.

\begin{figure}[tbp]
\centering
\includegraphics[width=0.96\textwidth]{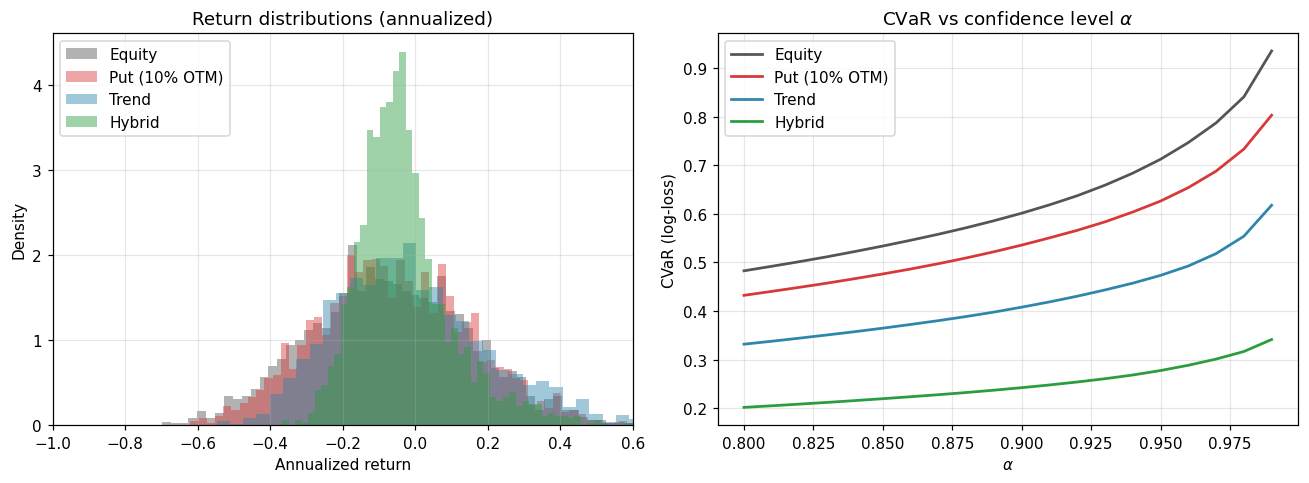}
\caption{Baseline strategy comparison from the experiments. Left: annualized return
distributions for unhedged equity, the put overlay, trend, and the hybrid.
Right: CVaR as a function of the confidence level $\alpha$. In the baseline,
the hybrid has the lowest CVaR curve across the displayed tail levels.}
\label{fig:baseline}
\end{figure}

\begin{table}[tbp]
\centering
\caption{Baseline simulation outcomes from the experiments. The last column gives
the nonparametric bootstrap standard error of the $\CVaR_{0.95}$ estimator.}
\label{tab:baseline}
\begin{threeparttable}
\scriptsize
\begin{adjustbox}{max width=\textwidth}
\begin{tabular}{lrrrrrrrr}
\toprule
Strategy & Mean & Std & Sharpe & Skew & Kurt & VaR95 & CVaR95 & SE \\
\midrule
Equity         & $-0.0673$ & $0.2240$ & $-0.300$ & $0.165$ & $-0.002$ & $0.5540$ & $0.7126$ & $0.0211$ \\
Put (10\% OTM) & $-0.0569$ & $0.2124$ & $-0.268$ & $0.238$ & $0.027$ & $0.4995$ & $0.6265$ & $0.0165$ \\
Trend          & $\phantom{-}0.0048$ & $0.2456$ & $\phantom{-}0.020$ & $1.325$ & $3.921$ & $0.3861$ & $0.4734$ & $0.0127$ \\
Hybrid         & $-0.0289$ & $0.1330$ & $-0.217$ & $1.216$ & $2.663$ & $0.2276$ & $0.2773$ & $0.0068$ \\
\bottomrule
\end{tabular}
\end{adjustbox}
\begin{tablenotes}[flushleft]
\footnotesize
\item All numbers are annualized one-year path statistics from the
experiments. The CVaR column uses the log-loss convention
$L=-\log(X_T/X_0)$. The Hybrid allocation is an equal mix ($w=0.5$)
of the Put and Trend sleeves at the per-step return level. Bootstrap
standard errors use $B=500$ resamples. Because the design is stylized,
level differences should be read qualitatively rather than as calibrated
historical performance estimates.
\end{tablenotes}
\end{threeparttable}
\end{table}

The baseline message is exactly the one suggested by Proposition~\ref{prop:interior}. The
hybrid strategy has the lowest baseline CVaR ($0.2773$), meaningfully improving
on both the pure put ($0.6265$) and pure trend ($0.4734$) overlays. Trend is the
best of the four on mean return in this stylized sample, while both put and
hybrid overlays improve left-tail metrics substantially.

\paragraph{Monte Carlo uncertainty and robustness.}
Because the baseline differences are intentionally stylized and economically modest, it is useful to report sampling uncertainty alongside point estimates. For any strategy $k$, let $\widehat{C}_{k,\alpha}$ denote the simulated $\CVaR_\alpha$. Using $B$ nonparametric bootstrap resamples of the simulated paths, define
\[
\widehat{\mathrm{se}}(\widehat{C}_{k,\alpha})
:=
\left(
\frac{1}{B-1}
\sum_{b=1}^B
\bigl(\widehat{C}_{k,\alpha}^{(b)}-\bar C_{k,\alpha}\bigr)^2
\right)^{1/2},
\qquad
\bar C_{k,\alpha}
:=
\frac{1}{B}\sum_{b=1}^B \widehat{C}_{k,\alpha}^{(b)}.
\]
Likewise, for a pair of strategies $k,\ell$, it is informative to examine the bootstrap distribution of the difference
\[
\widehat{\Delta}_{k,\ell,\alpha}
:=
\widehat{C}_{k,\alpha}-\widehat{C}_{\ell,\alpha}.
\]
This does not change the economic message of the section. It simply clarifies which conclusions are structural and which are numerically close. With $2{,}000$ paths, the $0.95$ tail contains about $100$ observations and the $0.99$ tail contains about $20$ observations, so the deepest-tail curves in the figures should be read as qualitative diagnostics rather than precise estimates. In the present design, the main object of interest is the regime ordering---put impact protection, hybrid/trend strength in persistent bears, and mixed solutions in intermediate regimes---rather than the exact magnitude of small baseline CVaR gaps.

\subsection{Four-axis hedge-quality diagnostics}
\label{sec:diag_empirical}

The diagnostic framework of \Cref{sec:diagnostics} is an interpretability
layer around the scalar CVaR objective. It is useful to verify that the four
dimensions---conditional convexity, tail reliability, non-stress carry, and
drawdown persistence---separate the strategies in the economically expected
way. \Cref{tab:diagnostics_empirical} reports the four scores computed on
weekly returns aggregated from the same Monte Carlo paths used for
\Cref{tab:baseline}. The baseline equity return stream plays the role of the
market $R_M$ in \eqref{eq:downside_beta}--\eqref{eq:persistence}. We use
$\alpha=0.95$ for the Conv/Hit/Carry diagnostics and a persistence window of
$h=4$ weeks with depth threshold $d=5\%$.

\begin{table}[tbp]
\centering
\caption{Four-axis hedge-quality diagnostics for the baseline experiment.}
\label{tab:diagnostics_empirical}
\begin{threeparttable}
\begin{adjustbox}{max width=0.95\textwidth}
\begin{tabular}{lrrrr}
\toprule
Strategy & $\mathsf{Conv}_{i,\alpha}$ & $\mathsf{Hit}_{i,\alpha}$ & $\mathsf{Carry}_{i,\alpha}$ & $\mathsf{Pers}_{i,h,d}$ \\
\midrule
Equity         & $\phantom{-}0.0002$ & $0.0000$ & $\phantom{-}0.0035$ & $0.5124$ \\
Put (10\% OTM) & $\phantom{-}0.3104$ & $0.0000^{\dagger}$ & $\phantom{-}0.0030$ & $0.4764$ \\
Trend          & $-0.1454$ & $0.6978$ & $-0.0016$ & $0.4486$ \\
Hybrid         & $\phantom{-}0.0691$ & $0.5180$ & $\phantom{-}0.0007$ & $0.3039$ \\
\bottomrule
\end{tabular}
\end{adjustbox}
\begin{tablenotes}[flushleft]
\footnotesize
\item Scores are computed on weekly returns (non-overlapping 5-day windows)
aggregated from the baseline $2{,}000$-path simulation. Conv, Hit, and Carry
use $\alpha=0.95$; Pers uses $h=4$ weeks and $d=0.05$. Under the convention
$\mathsf{Carry}_{i,\alpha}:=-\E[R_i\mid(\mathcal{A}_\alpha^0)^c]$ of
\eqref{eq:carrydiag}, a \emph{positive} Carry value indicates drag outside
stress, while a \emph{negative} value indicates a positive carry premium.
\item[$\dagger$] Scale artifact of the single-week Hit on a monthly OTM
option; see discussion below (``The zero hit ratio for Equity and Put'').
The convexity channel is captured instead by $\mathsf{Conv}_{i,\alpha}=0.31$.
\end{tablenotes}
\end{threeparttable}
\end{table}

The pattern is the expected one. The pure put sleeve has by far the highest
conditional convexity ($0.31$), consistent with its jump-repricing mechanism
$R_P(t,s,v;\zeta)$. The trend sleeve has the highest tail reliability
($0.70$): because trend goes short following a sustained drop, it is
frequently positive in market-tail weeks even though its conditional beta is
not negative. Equity has essentially zero conditional convexity by
construction ($\mathsf{Conv}_{i,\alpha}\approx 0$), since its downside and
normal-state betas to itself are both equal to one. The Hybrid strategy sits
between the two hedge corners on conditional convexity, reliability, and
carry, but its persistence score ($0.30$) sits \emph{below} both parent
sleeves and Equity rather than between them. This is consistent with the
two-channel mechanism story: the put sleeve's drawdown-window payoff is
concentrated in the early part of a decline (once the contract moves into
the money), while the trend sleeve's defensive positioning builds later,
after the signal has had time to cross through zero. A convex combination
of the two therefore produces neither a large early revaluation nor a
strongly defensive late position, so the fraction of drawdown windows with
positive cumulative return declines relative to either parent taken alone.
Importantly, this does not conflict with the CVaR ranking: $\mathsf{Pers}$
measures the probability of cumulative improvement \emph{along} a
drawdown window, whereas CVaR measures expected terminal loss in the tail.
The Hybrid optimizes the latter and is evaluated on the former; the
resulting profile is exactly what the mechanism-allocation interpretation
of hybrid demand predicts --- the optimum is not the best on any single
axis, it is the best weighted combination under the CVaR objective.

\paragraph{The zero hit ratio for Equity and Put.}
The single-week reliability score $\mathsf{Hit}_{i,\alpha}$ attains $0$ for
the Equity sleeve (as expected: being strictly equal to the market, its
return is negative whenever the market is in its own lower tail) and also
for the 10\% OTM Put sleeve. The latter is initially surprising, but
reflects the timing of option revaluation: at the moment the market enters
its lower-tail week, a 10\% OTM one-month put is typically still
out-of-the-money in Black--Scholes terms; the payoff builds only as $S$
moves further below the strike, as implied vol expands, or through the
subsequent week's repricing. Within a single 5-day window,
net-of-premium-decay sleeve returns therefore remain non-positive almost
everywhere in the lower-tail bucket. This is a known scale artifact of
one-period hit ratios for OTM option overlays rather than a statement that
the put sleeve does not help in stress; the convexity channel is captured
instead by $\mathsf{Conv}_{i,\alpha}$. A more robust reliability diagnostic
is the event-horizon variant
$\mathsf{Hit}^{h_{\mathrm{ev}}}_{i,\alpha}
:=\Prob(\prod_{k=1}^{h_{\mathrm{ev}}}(1+R_{i,t+k})>1\mid\mathcal{A}_\alpha^0)$
with $h_{\mathrm{ev}}$ equal to the roll horizon of the sleeve; computing
the event-horizon variant and cross-validating the ranking is left to a
follow-up refinement of the accompanying experiments.

The experiments then consider three regimes designed to isolate the economics
of the two hedge channels: a flash-crash regime with frequent deep negative
jumps, a prolonged-bear regime with strongly negative drift and elevated
variance, and a volatility-spike regime with high initial variance. The results
are shown in \cref{fig:scenarios,tab:scenarios}.

\begin{figure}[tbp]
\centering
\includegraphics[width=0.99\textwidth]{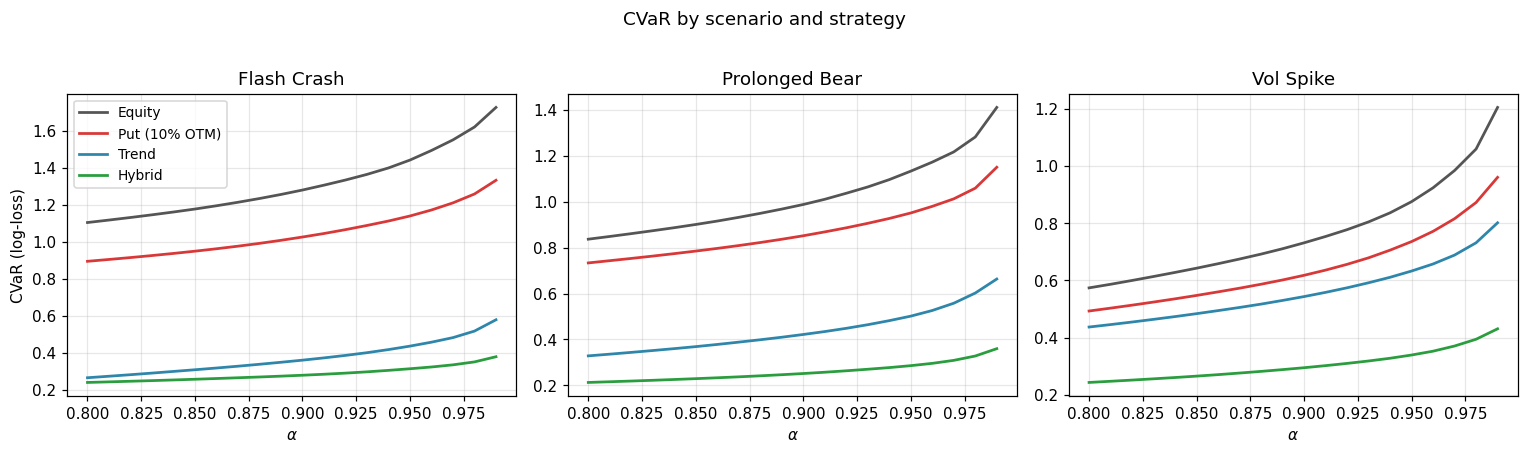}
\caption{Scenario-specific CVaR curves in the stylized experiment. In each
reported regime, the Hybrid curve lies below both pure-sleeve endpoints across
the displayed $\alpha$ levels, consistent with the mechanism logic of
Proposition~\ref{prop:interior}. Trend alone provides substantial left-tail
reduction in all three regimes; the incremental role of convex insurance is
largest in the volatility-spike regime.}
\label{fig:scenarios}
\end{figure}

\begin{table}[tbp]
\centering
\caption{Scenario-specific CVaR$_{0.95}$ and optimal put weights from the
experimental grid search.}
\label{tab:scenarios}
\begin{threeparttable}
\begin{adjustbox}{max width=0.99\textwidth}
\begin{tabular}{lrrrrrrr}
\toprule
Scenario & Equity & Put & Trend & Hybrid & $w^*_{0.90}$ & $w^*_{0.95}$ & $w^*_{0.99}$ \\
\midrule
Flash Crash    & $1.4411$ & $1.1389$ & $0.4360$ & $0.3137$ & $0.45$ & $0.50$ & $0.55$ \\
Prolonged Bear & $1.1335$ & $0.9511$ & $0.5013$ & $0.2854$ & $0.55$ & $0.55$ & $0.60$ \\
Vol Spike      & $0.8756$ & $0.7360$ & $0.6330$ & $0.3393$ & $0.60$ & $0.60$ & $0.65$ \\
\bottomrule
\end{tabular}
\end{adjustbox}
\begin{tablenotes}[flushleft]
\footnotesize
\item $w^*_{\alpha}$ is the grid-search optimum in the reduced hybrid class,
where $w=1$ corresponds to a pure put sleeve and $w=0$ to a pure trend sleeve.
The grid search uses $w\in\{0.00,0.05,\dots,1.00\}$ (spacing $\Delta w=0.05$,
$21$ points) on the same $2{,}000$-path simulation per scenario, with common
random numbers across grid points. CVaR values are on the log-loss scale
$L=-\log(X_T/X_0)$.
\end{tablenotes}
\end{threeparttable}
\end{table}

Three features stand out. First, in all three regimes the optimized Hybrid
strategy achieves the lowest CVaR within the displayed grid, with interior
optima $w^*\in[0.45,\,0.65]$ --- consistent with the sufficient-condition logic
of Proposition~\ref{prop:interior}, in the sense that the numerical objective
exhibits the interior-minimum behavior the proposition characterizes. This is an
in-simulation grid result, not a theorem of universal hybrid dominance. Its
interpretation comes from the mechanism story: puts and trend are complementary
rather than substitutable, and a combination of the two exploits both the
instantaneous-jump protection of convex insurance
(Proposition~\ref{prop:jumpprotect}) and the persistent-drawdown protection of trend
(Lemma~\ref{lem:signal_mechanics}). Second, the optimal put weight rises
monotonically with tail aversion $\alpha$ in every scenario, reflecting the
increasing marginal value of convex insurance as the investor cares about
deeper losses. Third, the relative role of puts is largest in the
volatility-spike regime, where the hybrid optimum sits near $w^*\approx
0.60\text{--}0.65$; by contrast, in the flash-crash regime the very large
Trend reduction (CVaR falls from $1.44$ to $0.44$ versus $1.14$ for the pure
put) means that Trend does most of the work and convex insurance is the
marginal complement rather than the dominant channel.

\begin{figure}[tbp]
\centering
\includegraphics[width=0.98\textwidth]{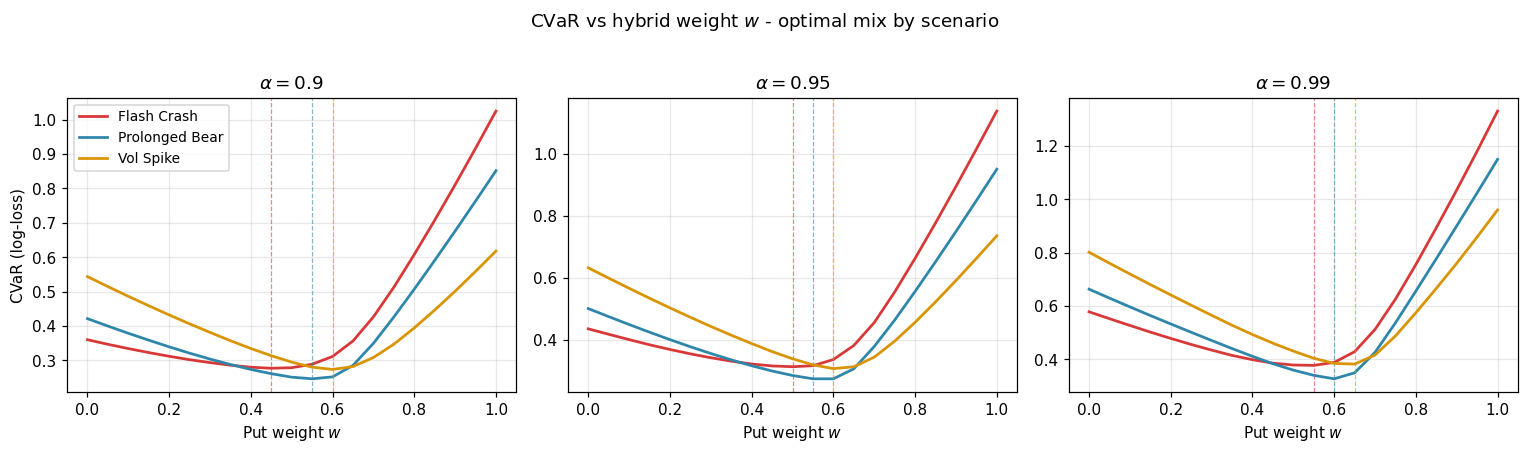}
\caption{CVaR as a function of the put weight $w$ in the reduced hybrid class,
one panel per confidence level $\alpha\in\{0.90,0.95,0.99\}$. Dashed vertical
lines mark the grid-search optimum $w^*_\alpha$ for each scenario. All curves
are visibly convex in $w$ and admit clearly interior minimizers, consistent
with the assumptions of Proposition~\ref{prop:interior}.}
\label{fig:optimalw}
\end{figure}

\subsection{Policy-gradient implementation}

The companion replication script also makes the policy-gradient result operational by optimizing the
hybrid weight directly. Using the correct tail-average gradient estimator from
\eqref{eq:pg_est}, projected gradient descent in the flash-crash regime
converges from the pure-trend corner $w_0=0$ to an interior optimum near
$w^*\approx 0.46$, and the analytic tail-average gradient matches the numerical
finite-difference gradient almost exactly across the full weight grid.

\begin{figure}[tbp]
\centering
\includegraphics[width=0.98\textwidth]{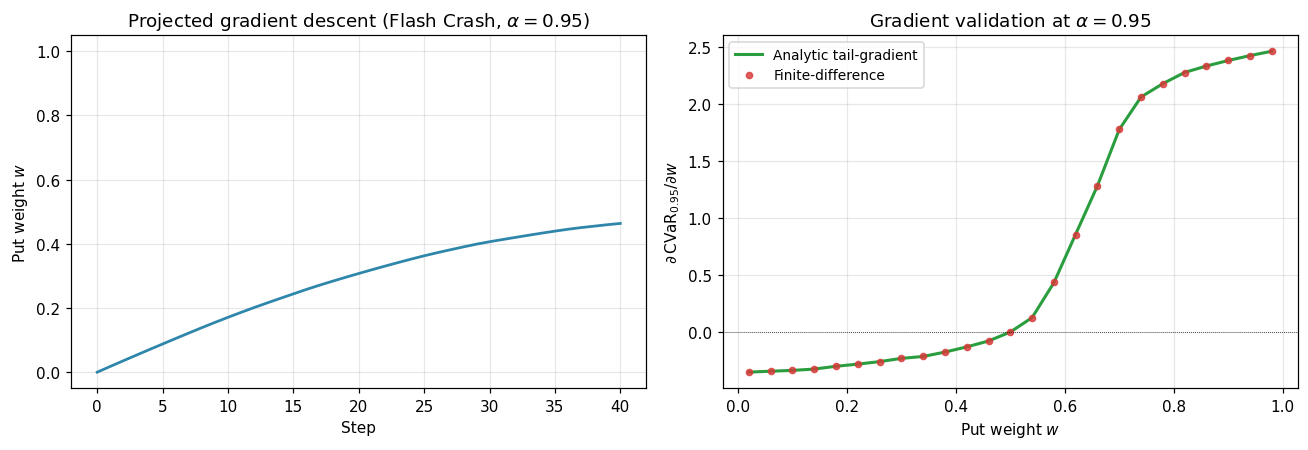}
\caption{Policy-gradient implementation in the flash-crash regime. Left:
projected gradient descent on the put weight $w$ at $\alpha=0.95$, starting
from $w_0=0$ and converging monotonically to an interior optimum near
$w^*\approx 0.46$. Right: the analytic tail-average CVaR gradient
(Theorem~\ref{thm:pg}) and a finite-difference benchmark overlap across the
full weight grid; the sign change at $w^*$ identifies the first-order
condition numerically.}
\label{fig:gradient}
\end{figure}

\FloatBarrier
\section{Discussion and limitations}
\label{sec:discussion}

The theoretical message of the paper is stronger than any one numerical
exercise. A model that respects marked-to-market option accounting, jump risk,
stochastic variance, and signal dynamics naturally produces a division of labor
between puts and trend. Puts hedge states that arrive before the signal can
react. Trend hedges states that last long enough for the signal to turn the
portfolio defensive. The hybrid question is therefore not whether one channel is
``better'' in the abstract, but which channel covers which region of the tail.

At the same time, the present implementation is intentionally modest. The
simulation is single-asset, not the multi-asset managed-futures environment in
which trend historically shines most clearly. The current option pricer is a
fast Black--Scholes proxy rather than a full Heston--jump calibration. The jump
process also leaves variance itself continuous; allowing co-jumps in price and
variance would be a natural extension. None of those limitations undermines the
main mechanism result, but they do matter for production use.

The practical implication is that a live allocation would require an additional
validation layer: option-surface calibration, transaction-cost and bid--ask
stress tests, independent holdout evaluation of the selected hybrid weight, and
confidence intervals for all reported CVaR improvements. The framework is meant
to organize that validation, not to replace it.

A fully empirical version of the paper would extend the present framework in
three directions. First, estimate the overlay drift $\mu_P$ and jump repricing
term $R_P$ from a live index option surface. Second, move from a single-asset
trend signal to the diversified multi-asset construction emphasized by
\citet{Hurst2017} and \citet{Ilmanen2020}. Third, allow the crash intensity
$\lambda$ to be regime dependent or self-exciting so that convex demand can rise
endogenously into stressed states; the mutually-exciting jump-process
framework of \citet{AitSahalia2015} for financial contagion is a natural
starting point, and connects directly to empirical evidence that tail-risk
premia concentrate in clustered episodes rather than independent arrivals.

\subsection{Extension to a tradable hedge universe}
\label{sec:tradeable_universe}

The two-channel model should also be viewed as the smallest member of a larger
strategy-vector formulation. Let $H_t=(H_t^1,\ldots,H_t^n)$ denote tradable hedge
sleeves, such as listed index options, put spreads, variance or volatility
exposures, dispersion trades, volatility futures, cross-asset trend, rates trend,
FX trend, commodity trend, or other liquid relative-value hedges. The wealth
equation becomes
\begin{equation}
\frac{dX_t}{X_{t^-}}
= \pi_t\frac{dS_t}{S_{t^-}}
+ w_t^\top \frac{dH_t}{H_{t^-}},
\qquad w_t\in\mathcal{W}\subset\R^n,
\label{eq:vector_wealth}
\end{equation}
where $\mathcal{W}$ contains budget, leverage, liquidity, and shorting
constraints. Each sleeve can be assigned a characteristic vector
\begin{equation}
z_i
=
\left(
\mathsf{Conv}_{i,\alpha},
\mathsf{Hit}_{i,\alpha},
\mathsf{Carry}_{i,\alpha},
\mathsf{Pers}_{i,h,d}
\right).
\label{eq:characteristic_vector}
\end{equation}
The structuring problem is then not simply to choose instruments, but to choose a
portfolio of mechanisms. For example, a volatility sleeve may have strong
conditional convexity but high carry cost; a cross-asset trend sleeve may have
weaker impact protection but better persistence; and a dispersion sleeve may
supply a more specific volatility-risk exposure. The optimization can select
weights $w$ directly while imposing the constraints in
\eqref{eq:constrained_diagnostic} or the penalties in
\eqref{eq:penalized_diagnostic}. This turns the model into a design tool: a
structured hedge can be described not only by its CVaR improvement, but by the
explicit mix of convexity, reliability, carry, and persistence it delivers.

The present paper does not estimate this full characteristic matrix. Doing so
would require historical or surface-implied returns for each sleeve and common
stress definitions across asset classes. But the mathematical extension is
straightforward: replace the scalar convex exposure $q_t$ by $w_t$, augment the
state by any sleeve-specific signals or volatility factors needed for Markovian
dynamics, and solve the same CVaR control problem on the enlarged state space.

\paragraph{What is structural and what is calibration-dependent.}
The strongest conclusions of the paper are analytical rather than numerical.
The marked-to-market accounting correction, the state augmentation to
$(X,S,v,M)$, and the separation of two protection mechanisms --- immediate
jump repricing for convex insurance and delayed signal-driven derisking for
trend --- all arise from the model structure itself. By contrast, the exact
magnitudes of the simulated CVaR gaps, the location of the hybrid optimum,
and the quantitative slope of the weight-CVaR curve are calibration-dependent.
This distinction is important for interpretation. The simulations should be
read as computational illustrations of the structural ranking, not as a claim
that the reported weight levels or tail improvements are stable empirical
constants across markets, maturities, or option-surface environments.

\section{Conclusion}
\label{sec:conclusion}

This paper develops a continuous-time framework for tail-risk management that puts
convex option insurance and dynamic trend-following inside one coherent CVaR
objective. The key modeling choice is to treat the option overlay as a traded
marked-to-market asset. Once that accounting is imposed, the relevant state is
not just wealth but wealth, spot, variance, and the trend signal; and once that
state is written down, the economic separation of the two protection
mechanisms becomes sharp.

The central mechanism story is about timing rather than dominance. On an
isolated jump, convex insurance reprices contractually and immediately, while
trend is necessarily late because its signal must first cross through zero
(Proposition~\ref{prop:jumpprotect} and Corollary~\ref{cor:threshold}). Over a horizon that
contains repeated jumps or a sustained drawdown, however, trend has time to
react: after each shock the signal drifts toward negative values and the
overlay turns defensive (Lemma~\ref{lem:signal_mechanics}). A year-long flash-crash
\emph{regime} is therefore not the same object as one jump on impact: it
contains many between-jump stretches in which trend can work, and the optimal
hedge over that regime is not the same as the optimal hedge against a single
shock.

The stylized experimental evidence reflects exactly this picture. In the
reported simulations, hybrid allocations deliver the lowest CVaR within the
specified strategy grid, with interior optima near
$w^*\in[0.45,\,0.65]$ across baseline, flash-crash, prolonged-bear, and
vol-spike calibrations. Trend accounts for a large share of the CVaR reduction
in regimes with repeated jumps because it has time to react between shocks,
while convex insurance adds a further improvement on top of trend and becomes
relatively more valuable as the tail-aversion level $\alpha$ rises. The
diagnostic layer explains why both channels are wanted: the put sleeve loads
strongly on conditional convexity, the trend sleeve loads strongly on tail
reliability, and the hybrid is the allocation that balances these mechanisms
against carry and persistence. That is the kind of state-contingent answer
practitioners need. Tail-risk management is not a contest between convex
insurance and trend. It is an allocation problem across loss mechanisms, and
in a production setting those mechanisms can be represented as a vector of
tradable strategies rather than a small set of stylized instruments.

\appendix

\section{Replication and validation checklist}
\label{app:replication}

The accompanying package includes a Python script,
\path{replication_tail_risk_management.py}, that implements the same
stylized simulation architecture used in \Cref{sec:mc}: Heston-type variance
with full truncation, finite-activity jumps, rolled Black--Scholes put proxies,
EWMA trend signals, hybrid weights, CVaR curves, bootstrap standard errors, and
a projected-gradient illustration. The script is intentionally transparent
rather than optimized. Its purpose is to make the numerical pipeline auditable
and to let the reader increase the path count, alter the option-pricing proxy,
or introduce transaction-cost assumptions.

For a production empirical study, the following validation steps should be
reported alongside any headline CVaR improvement:
\begin{enumerate}[leftmargin=1.6em]
    \item \textbf{Common-random-number grid.} All hybrid weights should be
    evaluated on the same simulated shocks so that differences across weights
    reflect policy changes rather than Monte Carlo noise.
    \item \textbf{Holdout selection.} The weight $w^*$ should be selected on a
    training batch or historical window and evaluated on an independent holdout
    batch or later historical window.
    \item \textbf{Tail uncertainty.} CVaR estimates should be accompanied by
    bootstrap or asymptotic standard errors. At confidence level $\alpha$, an
    $N$-path simulation uses roughly $(1-\alpha)N$ tail observations; this is
    especially important for $\alpha=0.99$.
    \item \textbf{Option-surface sensitivity.} The put sleeve should be stress
    tested across implied-volatility premia, skew assumptions, bid--ask spreads,
    and roll-date conventions.
    \item \textbf{Mechanism diagnostics.} The four scores
    $(\mathsf{Conv},\mathsf{Hit},\mathsf{Carry},\mathsf{Pers})$ should be
    reported together with CVaR so that a lower tail loss is not mistaken for a
    single economic source of protection.
\end{enumerate}

These checks separate the structural claim of the paper---convex crash
protection and signal-driven drawdown protection are complementary mechanisms---
from the calibration-specific claim that a particular hybrid weight is optimal
in a particular market environment.

\printbibliography

\end{document}